\documentclass[lettersize,journal,onecolumn]{IEEEtran}
\usepackage{PerlazaTIT2023}
\begin{document}
 
\title{Corrections to ``The Method of Gaps: \\Exact Expressions for the Generalization Error of Supervised Learning Algorithms''}
 
\author{Samir M. Perlaza{\orcidlink{0000-0002-1887-9215}} 
\thanks{
Samir M. Perlaza is with the College of Computing, Universit\'e Mohamed VI Polytechnique (UM6P), Ben Guerir, 43150, Morocco; also with the Centre Inria d'Universit\'e C\^ote d'Azur, INRIA, 06902 Sophia Antipolis, France; also with the Department of Electrical and Computer Engineering at  Princeton University, Princeton, NJ 08544 USA; and also with the GAATI Mathematics Laboratory at the University of French Polynesia, 98702 Faaa, French Polynesia (e-mail: samir.perlaza@um6p.ma). 
} 
}

% The paper headers
%\markboth{Journal of \LaTeX\ Class Files,~Vol.~14, No.~8, August~2021}%
%{Shell \MakeLowercase{\textit{et al.}}: A Sample Article Using IEEEtran.cls for IEEE Journals}

%\IEEEpubid{0000--0000/00\$00.00~\copyright~2021 IEEE}
% Remember, if you use this you must call \IEEEpubidadjcol in the second 
% column for its text to clear the IEEEpubid mark.
 \maketitle 

\begin{abstract} 
In this paper, a counterexample to Lemma~5 of the paper referenced in the title is provided.
Consequently, the formulation of the method of data-driven gaps presented in the referenced paper relies on an additional assumption, namely, that the conditional distribution of the training dataset given the learned model has identical data-point-specific marginals.
To overcome this limitation, a generalized formulation of the method of data-driven gaps that does not rely on Lemma~5 and does not require the training samples to be independent or identically distributed is presented.
Through this formulation, the applicability of the method of data-driven gaps is extended to a broader class of data distributions and randomized learning algorithms.
\end{abstract}

\begin{IEEEkeywords}
Generalization Error, Supervised Learning, Worst-Case Data Generating Probability Measure; Gibbs Algorithm, Hypothesis tests, and Pythagorean Identities.
\end{IEEEkeywords}

\section{Context and Counter Example to Lemma~5 in~\cite{perlaza2024generalization}}

In~\cite[Lemma~5]{perlaza2024generalization}, it is asserted that, for an arbitrary randomized learning algorithm, independent and identically distributed (i.i.d.) training samples induce a conditional distribution of the training dataset given the learned model whose data-point-specific marginals are identical.
Although this property is satisfied by certain learning algorithms, particularly those that are invariant under permutations of the training data points, it does not hold in general.
For instance, Gibbs algorithms based on the empirical risk, as studied in~\cite{PerlazaTIT2024,aminian2024information,xu2017information} and references therein, satisfy this property because their learning rules are invariant under permutations of the training data points.
In contrast, Gibbs algorithms based on a weighted empirical risk, such as those studied in~\cite{bermudez2026decentralized,bermudez2026Equivalence,bermudez2026machine}, need not satisfy this property when different weights are assigned to different training data points.

The formulation of the method of data-driven gaps presented in~\cite{perlaza2024generalization}, which constitutes one of the two variants of the method of gaps, relies on~\cite[Lemma~5]{perlaza2024generalization}.
Consequently, the results derived from this lemma require an additional assumption, namely, that the conditional distribution of the training dataset given the learned model has identical data-point-specific marginals.
In particular, the equality asserted in~\cite[Equation~39]{perlaza2024generalization} must be imposed as an additional assumption for the results in~\cite[Lemmas~6 and~7]{perlaza2024generalization} and~\cite[Theorems~16--18, 20, and~21]{perlaza2024generalization}, insofar as their derivations rely on~\cite[Lemma~5]{perlaza2024generalization}.

To overcome this limitation, a generalized formulation of the method of data-driven gaps that does not rely on~\cite[Lemma~5]{perlaza2024generalization} is presented.
Specifically, Lemmas~\ref{LemmaOctober13at23h53in2026Benguerir} and~\ref{LemmaOctober14at16h00in2026Benguerir} generalize~\cite[Lemmas~6 and~7]{perlaza2024generalization}, respectively, without requiring the training data points to be independent or identically distributed, or their conditional marginal distributions given the learned model to be identical.
These generalizations extend the applicability of the method of data-driven gaps to a broader class of training-data distributions and randomized learning algorithms.

Building on Lemmas~\ref{LemmaOctober13at23h53in2026Benguerir} and~\ref{LemmaOctober14at16h00in2026Benguerir}, generalized versions of~\cite[Theorems~16--18, 20, and~21]{perlaza2024generalization} are also established.
These results retain the corresponding expressions for the generalization error while replacing the data-point-specific marginal and conditional marginal measures with their respective averages.
As a consequence, neither the i.i.d. assumption on the training data nor the equality of their conditional marginal distributions given the learned model is required.

\subsection{The Error in the Proof of~\cite[Lemma~5]{perlaza2024generalization}}

The proof of~\cite[Lemma~5]{perlaza2024generalization} correctly establishes in~\cite[Equation~44]{perlaza2024generalization} that, if $P_{\vect{Z}} \in \simplex{\left(\set{X}\times\set{Y}\right)^n}$ is the product measure induced by a probability measure $P_Z \in \simplex{\set{X}\times\set{Y}}$, as assumed in~\cite[Equation~22]{perlaza2024generalization}, then, for all $m \inCountK{n}$ and all measurable sets $\set{A}\subseteq\set{X}\times\set{Y}$,
\begin{IEEEeqnarray}{rcl}
P_Z\left(\set{A}\right)
& = &
\int P_{Z_m|\vect{\Theta}=\vect{\theta}}\left(\set{A}\right)
\mathrm{d}P_{\vect{\Theta}}\left(\vect{\theta}\right),
\end{IEEEeqnarray}
where $P_{\vect{\Theta}}$ is defined in~\cite[Equation~28]{perlaza2024generalization}, and $P_{Z_m|\vect{\Theta}}$, defined in~\cite[Equation~38]{perlaza2024generalization}, is a conditional marginal of the probability measure $P_{\vect{Z}|\vect{\Theta}}$ in~\cite[Equation~35]{perlaza2024generalization}.

The error in the proof arises from an incorrect application of~\cite[Theorem~5.3.1]{ash2000probability}. Specifically, it is asserted that, for any pair $(i,j)\inCountK{n}^2$, the equality
\begin{IEEEeqnarray}{rcl}
\int P_{Z_i|\vect{\Theta}=\vect{\theta}}\left(\set{A}\right)
\mathrm{d}P_{\vect{\Theta}}\left(\vect{\theta}\right)
& = &
\int P_{Z_j|\vect{\Theta}=\vect{\theta}}\left(\set{A}\right)
\mathrm{d}P_{\vect{\Theta}}\left(\vect{\theta}\right),
\end{IEEEeqnarray}
which holds for every measurable set $\set{A}\subseteq\set{X}\times\set{Y}$, implies that
\begin{IEEEeqnarray}{rcl}
\label{EqOctober17at13h22in2026Benguerir}
P_{\vect{\Theta}}\left(
\left\{
\vect{\theta}\in\set{M}:
P_{Z_i|\vect{\Theta}=\vect{\theta}}\left(\set{A}\right)
=
P_{Z_j|\vect{\Theta}=\vect{\theta}}\left(\set{A}\right)
\right\}
\right)
& = & 1.
\end{IEEEeqnarray}
However, equality of the integrals over the entire set of models $\set{M}$ does not imply almost-sure equality of their integrands. Consequently, the conclusion in~\eqref{EqOctober17at13h22in2026Benguerir} is not justified, as further demonstrated by the counterexample presented below.

For a fixed measurable set $\set{A}\subseteq\set{X}\times\set{Y}$, from~\cite[Theorem~1.6.11]{ash2000probability}, it is sufficient for the conclusion in~\eqref{EqOctober17at13h22in2026Benguerir} to hold that, for every measurable set $\set{C}\subseteq\set{M}$,
\begin{IEEEeqnarray}{rcl}
\label{EqOctober17at14h35in2026Benguerir}
\int_{\set{C}} P_{Z_i|\vect{\Theta}=\vect{\theta}}\left(\set{A}\right)
\mathrm{d}P_{\vect{\Theta}}\left(\vect{\theta}\right)
& = &
\int_{\set{C}} P_{Z_j|\vect{\Theta}=\vect{\theta}}\left(\set{A}\right)
\mathrm{d}P_{\vect{\Theta}}\left(\vect{\theta}\right),
\end{IEEEeqnarray}
which is a condition that is not assumed in the statement  of~\cite[Lemma~5]{perlaza2024generalization}.  The assumptions of~\cite[Lemma~5]{perlaza2024generalization} guarantee the equality in~\eqref{EqOctober17at14h35in2026Benguerir} only when $\set{C}=\set{M}$, which is insufficient to establish the conclusion in~\eqref{EqOctober17at13h22in2026Benguerir}.

\subsection{The Counter-Example}
 
 Consider the case in which $\set{X} = \set{Y} = \lbrace 0,1 \rbrace$ and $\set{M} = \lbrace 0, \nu_1, \nu_2, \nu_1 + \nu_2 \rbrace$, with $0 < \nu_1 < \nu_2 < +\infty$. 
%
%Consider a dataset formed by two datapoints $\vect{z} = \left( (0,0) , (1,1) \right) \in \left( \set{X} \times \set{Y} \right)^2$.
%
Consider also an algorithm $P_{\vect{\Theta} | \vect{Z}}$ such that for all $\vect{\theta} \in \set{M}$,
\begin{IEEEeqnarray}{rlc}
P_{\vect{\Theta} | \vect{Z} = \vect{z}} \left( \left\lbrace \vect{\theta} \right\rbrace \right) = \ind{\vect{\theta} = x_1y_1 \nu_1 + x_2y_2 \nu_2},
\end{IEEEeqnarray}
where $\vect{z} = \left( (x_1,y_1) , (x_2,y_2) \right)$. 
Datasets are sampled from a product measure $P_{\vect{Z}} \in \simplex{\left( \set{X} \times \set{Y} \right)^2}$ induced by a measure $P_{Z} \in \simplex{\set{X} \times \set{Y}}$ that satisfies
\begin{IEEEeqnarray}{rlc}
\label{EqOctober17at17h50in2026atBenGuerirA}
P_{Z} \left( \left\lbrace (0,0) \right\rbrace \right) = 1 - P_{Z} \left( \left\lbrace (1,1) \right\rbrace \right) = \beta,
\end{IEEEeqnarray}
for some $\beta \in (0,1)$.
Hence, the measure~$P_{\vect{\Theta}}$ in~\cite[Equation~28]{perlaza2024generalization} satisfies for all $\vect{\theta} \in \set{M}$,
\begin{IEEEeqnarray}{rCl}
\label{EqOctober17at17h50in2026atBenGuerirB}
P_{\vect{\Theta}} \left( \lbrace \vect{\theta} \rbrace \right) & = &  \sum_{\left( z_1, z_2 \right) \in \left( \set{X}\times\set{Y} \right)^2} P_{\vect{\Theta} | \vect{Z} = \left( z_1, z_2 \right)} \left( \left\lbrace \vect{\theta} \right\rbrace \right)  P_{Z} \left( \left\lbrace  z_1 \right\rbrace \right)P_{Z} \left( \left\lbrace  z_2 \right\rbrace \right) \\
\label{EqOctober17at17h50in2026atBenGuerirC}
& = & \left\lbrace
\begin{array}{ccc}
\beta (1- \beta) & \text{ if } & \vect{\theta} = \nu_1 \\
\beta (1- \beta) & \text{ if } & \vect{\theta} = \nu_2 \\
(1- \beta)^{2} & \text{ if } & \vect{\theta} = \nu_1 + \nu_2 \\
\beta^{2} & \text{ if } & \vect{\theta} = 0.
\end{array}
\right.
\end{IEEEeqnarray}
Moreover, for all $\vect{\theta} \in \set{M}$ and for all $\hat{z} \in \set{X} \times \set{Y}$,
\begin{IEEEeqnarray}{rCl}
P_{\vect{\Theta}|Z_1 = \hat{z}} \left( \lbrace \vect{\theta} \rbrace \right) & = &  \sum_{ z_2  \in \set{X}\times\set{Y} } P_{\vect{\Theta} | \vect{Z} = \left( \hat{z}, z_2 \right)} \left( \left\lbrace \vect{\theta} \right\rbrace \right) P_{Z} \left( \left\lbrace  z_2 \right\rbrace \right) \mbox{ and }\\
P_{\vect{\Theta}|Z_2 = \hat{z}} \left( \lbrace \vect{\theta} \rbrace \right) & = &  \sum_{ z_1  \in \set{X}\times\set{Y} } P_{\vect{\Theta} | \vect{Z} = \left( z_1, \hat{z}  \right)} \left( \left\lbrace \vect{\theta} \right\rbrace \right) P_{Z} \left( \left\lbrace  z_1 \right\rbrace \right).
\end{IEEEeqnarray}
Thus, using Bayes' rule, for all $i \in \lbrace 1,2 \rbrace$, 
\begin{IEEEeqnarray}{rCl}
\label{EqOctober17at18h15in2026atBenGuerirA}
P_{Z_i | \vect{\Theta} = \vect{\theta}} \left( \lbrace \hat{z} \rbrace \right) & = & \frac{P_{\vect{\Theta}|Z_i = \hat{z}} \left( \lbrace \vect{\theta} \rbrace \right) P_{Z} \left( \left\lbrace \hat{z} \right\rbrace \right) }{P_{\vect{\Theta}} \left( \lbrace \vect{\theta} \rbrace \right) }.
\end{IEEEeqnarray}
Hence, using \eqref{EqOctober17at18h15in2026atBenGuerirA}, it follows that  for all $i \in \lbrace 1,2 \rbrace$ and for all $\vect{\theta} \in \set{M}$,
\begin{IEEEeqnarray}{rCl}
P_{Z_i | \vect{\Theta} = \vect{\theta}} \left( \lbrace (0,1) \rbrace \right) = P_{Z_i | \vect{\Theta} = \vect{\theta}} \left( \lbrace (1,0) \rbrace \right) & = &0,
\end{IEEEeqnarray}
which follows from the fact that $P_{Z} \left( \lbrace (0,1) \rbrace \right) = P_{Z} \left( \lbrace (1,0) \rbrace \right) = 0$.

In the particular case of $\hat{z} = (0,0)$, it follows that
\begin{IEEEeqnarray}{rCl}
P_{\vect{\Theta}|Z_1 = (0,0)} \left( \lbrace \vect{\theta} \rbrace \right) & = &  \sum_{ z_2  \in \set{X}\times\set{Y} }  \ind{\vect{\theta} =   x_2y_2 \nu_2} P_{Z} \left( \left\lbrace  z_2 \right\rbrace \right) \\
\label{EqOctober17at17h50in2026atBenGuerirD}
& = & \left\lbrace
\begin{array}{ccc}
0 & \text{ if } & \vect{\theta} = \nu_1 \\
1-\beta & \text{ if } & \vect{\theta} = \nu_2 \\
0 & \text{ if } & \vect{\theta} = \nu_1 + \nu_2 \\
\beta & \text{ if } & \vect{\theta} = 0
\end{array}
\right. 
\end{IEEEeqnarray}
and
\begin{IEEEeqnarray}{rCl}
P_{\vect{\Theta}|Z_2 = (0,0)} \left( \lbrace \vect{\theta} \rbrace \right) & = & \sum_{ z_1  \in \set{X}\times\set{Y} }  \ind{\vect{\theta} =   x_1y_1 \nu_1} P_{Z} \left( \left\lbrace  z_1 \right\rbrace \right)  \\
\label{EqOctober17at17h50in2026atBenGuerirE}
& = & \left\lbrace
\begin{array}{ccc}
1-\beta & \text{ if } & \vect{\theta} = \nu_1 \\
0 & \text{ if } & \vect{\theta} = \nu_2 \\
0 & \text{ if } & \vect{\theta} = \nu_1 + \nu_2 \\
\beta & \text{ if } & \vect{\theta} = 0.
\end{array}
\right. 
\end{IEEEeqnarray}

Hence, using \eqref{EqOctober17at17h50in2026atBenGuerirA}, \eqref{EqOctober17at17h50in2026atBenGuerirC}, and~\eqref{EqOctober17at17h50in2026atBenGuerirD} in~\eqref{EqOctober17at18h15in2026atBenGuerirA} yields
\begin{IEEEeqnarray}{rCl}
\label{EqOctober17at19h06in2026atBenGuerirA}
P_{Z_1 | \vect{\Theta}  = \vect{\theta}} \left( \lbrace(0,0)\rbrace \right) 
& = & \left\lbrace
\begin{array}{ccc}
0 & \text{ if } & \vect{\theta} = \nu_1 \\
1 & \text{ if } & \vect{\theta} = \nu_2 \\
0 & \text{ if } & \vect{\theta} = \nu_1 + \nu_2 \\
1 & \text{ if } & \vect{\theta} = 0.
\end{array}
\right.
\end{IEEEeqnarray}
Similarly, using \eqref{EqOctober17at17h50in2026atBenGuerirA}, \eqref{EqOctober17at17h50in2026atBenGuerirC}, and~\eqref{EqOctober17at17h50in2026atBenGuerirE} in~\eqref{EqOctober17at18h15in2026atBenGuerirA} yields 
\begin{IEEEeqnarray}{rCl}
\label{EqOctober17at19h06in2026atBenGuerirB}
P_{Z_2 | \vect{\Theta}  = \vect{\theta}} \left( \lbrace(0,0)\rbrace \right) 
& = & \left\lbrace
\begin{array}{ccc}
1 & \text{ if } & \vect{\theta} = \nu_1 \\
0 & \text{ if } & \vect{\theta} = \nu_2 \\
0 & \text{ if } & \vect{\theta} = \nu_1 + \nu_2 \\
1 & \text{ if } & \vect{\theta} = 0.
\end{array}
\right.
\end{IEEEeqnarray}
Hence, from~\eqref{EqOctober17at19h06in2026atBenGuerirA} and~\eqref{EqOctober17at19h06in2026atBenGuerirB} , it follows that for all $\vect{\theta}\in\{\nu_1,\nu_2\}$,
\begin{IEEEeqnarray}{rCl}
P_{Z_1 | \vect{\Theta}  = \vect{\theta}} \left( \lbrace(0,0)\rbrace \right) 
& \neq & P_{Z_2 | \vect{\Theta}  = \vect{\theta}} \left( \lbrace(0,0)\rbrace \right).
\end{IEEEeqnarray}
Moreover, from~\eqref{EqOctober17at17h50in2026atBenGuerirC}, the set $\lbrace \nu_1,\nu_2 \rbrace$ has probability $2\beta(1-\beta)>0$ under $P_{\vect{\Theta}}$. Consequently, the data-point-specific conditional marginal measures are not identical $P_{\vect{\Theta}}$-almost surely, which contradicts the assertion of~\cite[Lemma~5]{perlaza2024generalization}.
  
\section{The Method of Data-Driven Gaps}

In this section, the two steps of the method of data-driven gaps are reformulated to accommodate a broader class of training-data distributions $P_{\vect{Z}}$ and randomized learning algorithms $P_{\vect{\Theta}|\vect{Z}}$.
This leads to more general expressions for the generalization error~$\overline{\overline{\mathsf{G}}} \left(P_{\vect{\Theta} | \vect{Z}}, P_{\vect{Z}} \right)$ in~\cite[Equation~11]{perlaza2024generalization} and avoids using~\cite[Lemma~5]{perlaza2024generalization}.

\subsection{Step One}
The first step of the \emph{method of data-driven gaps} consists in writing the generalization error~$\overline{\overline{\mathsf{G}}} \left(P_{\vect{\Theta} | \vect{Z}}, P_{\vect{Z}} \right)$ in~\cite[Equation~11]{perlaza2024generalization} in terms of the functional $\mathsf{G}: \set{M} \times \simplex{\set{X} \times \set{Y} } \times \simplex{\set{X} \times \set{Y}}$ in~\cite[Equation~19]{perlaza2024generalization}.
To achieve such an objective, consider the marginal measures of the probability measure~$P_{\vect{Z}}$, denoted by~$P_{Z_m} \in \triangle\left( \set{X} \times \set{Y} \right)$, with $m \inCountK{n}$, and note that for all measurable subsets $\set{A}\subseteq \left( \set{X} \times \set{Y} \right)$,
 \begin{IEEEeqnarray}{rcl}
\label{EqOctober13at22h54in2026Benguerir}
P_{Z_m} \left( \set{A} \right) & = & P_{\vect{Z}} \left( \set{B}_m \right),
\end{IEEEeqnarray}
where the set~$\set{B}_m \subseteq \left( \set{X} \times \set{Y} \right)^n$ is such that
\begin{IEEEeqnarray}{rcl}
\label{EqNovember25at09h37in2025Sophia}
\set{B}_m & = & \set{A}_{1} \times \set{A}_2 \times \ldots \times \set{A}_m \times \ldots \times \set{A}_n,
\end{IEEEeqnarray}
with~$\set{A}_t = \set{X} \times \set{Y}$ if~$t \in \lbrace 1, 2, \ldots, n \rbrace \setminus \lbrace m \rbrace$, and~$\set{A}_m = \set{A}$.

Consider also the marginal measures of the probability measure~$P_{\vect{Z} | \vect{\Theta} = \vect{\theta}}$ in~\cite[Equation~35]{perlaza2024generalization}, denoted by~$P_{Z_m | \vect{\Theta} = \vect{\theta}} \in \triangle\left( \set{X} \times \set{Y}\right)$, with $\vect{\theta} \in \set{M}$ and~$m \in \lbrace 1,2, \ldots, n \rbrace$, which satisfies for all measurable subsets $\set{A}\subseteq \left( \set{X} \times \set{Y} \right)$,
 \begin{IEEEeqnarray}{rcl}
P_{Z_m | \vect{\Theta} = \vect{\theta}} \left( \set{A} \right) & = & P_{\vect{Z} | \vect{\Theta} = \vect{\theta}} \left( \set{B}_m \right),
\end{IEEEeqnarray}
where the set~$\set{B}_m$ is defined in \eqref{EqNovember25at09h37in2025Sophia}.
Consider also the measures~$\bar{P}_{Z} \in \triangle\left( \set{X} \times \set{Y} \right)$ and~$\bar{P}_{Z_m | \vect{\Theta} = \vect{\theta}} \in \triangle\left( \set{X} \times \set{Y} | \set{M} \right)$ such that for all measurable sets $\set{A}$,
\begin{IEEEeqnarray}{rcl}
\label{EqOctober13at23h16in2026BenguerirA}
\bar{P}_{Z}\left( \set{A} \right) & = & \frac{1}{n} \sum_{m =1}^{n} P_{Z_m}\left( \set{A} \right)  \mbox{ and } \\ 
\label{EqOctober13at23h16in2026BenguerirB}
\bar{P}_{Z | \vect{\Theta} = \vect{\theta}} \left( \set{A} \right)  & = & \frac{1}{n} \sum_{m =1}^{n} P_{Z_m | \vect{\Theta} = \vect{\theta}}\left( \set{A} \right).
\end{IEEEeqnarray}
These measures $\bar{P}_{Z}$ in~\eqref{EqOctober13at23h16in2026BenguerirA} and $\bar{P}_{Z | \vect{\Theta} = \vect{\theta}}$ in~\eqref{EqOctober13at23h16in2026BenguerirB} are well-defined probability measures even when data points are neither independent nor identically distributed, and when the learning algorithm is not invariant under permutations of the training data points.
These measures can be respectively interpreted as the ``average'' marginal measure of the joint measure $P_{\vect{Z}}$ and the average conditional marginal measure given a model $\vect{\theta}$ of the joint conditional measure $P_{\vect{Z}| \vect{\Theta} = \vect{\theta}}$. 
Using this notation, the following lemma formalizes the first step of the method of data-driven gaps and shows that the generalization error~$\overline{\overline{\mathsf{G}}} \left(P_{\vect{\Theta} | \vect{Z}}, P_{\vect{Z}} \right)$ in~\cite[Equation~11]{perlaza2024generalization} depends exclusively on such ``average'' measures $\bar{P}_{Z}$ and $\bar{P}_{Z | \vect{\Theta}}$.
\begin{lemma}\label{LemmaOctober13at23h53in2026Benguerir}
Consider the generalization error~$\overline{\overline{\mathsf{G}}}(P_{\vect{\Theta} | \vect{Z}}, P_{\vect{Z}} )$ in~\cite[Equation~11]{perlaza2024generalization} and assume that 
\begin{itemize}
\item[$(a)$]  For all~$\vect{z} \in \left( \set{X} \times \set{Y} \right)^n$, the probability measures~$P_{\vect{\Theta} | \vect{Z} = \vect{z}}$ and~$P_{\vect{\Theta}}$ in~\cite[Equation~28]{perlaza2024generalization} are mutually absolutely continuous; and
\item[$(b)$]  For all~$\vect{\theta} \in \set{M}$, the probability measures~$P_{\vect{Z} | \vect{\Theta} = \vect{\theta}}$ and~$P_{\vect{Z}}$ in~\cite[Equation~35]{perlaza2024generalization} are mutually absolutely continuous.
\end{itemize}
Then,
	\begin{IEEEeqnarray}{rCl}
\label{EqOctober21at11h08in2024Agadir}	
		\overline{\overline{\mathsf{G}}}(P_{\vect{\Theta} | \vect{Z}},P_{\vect{Z}} )  & = &  \int \mathsf{G}\left( \vect{\theta}, \bar{P}_{Z}, \bar{P}_{Z | \vect{\Theta} = \vect{\theta}}\right) \mathrm{d} P_{\vect{\Theta}}\left( \vect{\theta}\right),
	\end{IEEEeqnarray}
	where the functional~$\mathsf{G}$ is defined in~\cite[Equation~19]{perlaza2024generalization}; and the probability measures $\bar{P}_{Z}$ and $\bar{P}_{Z | \vect{\Theta} = \vect{\theta}}$ are defined in~\eqref{EqOctober13at23h16in2026BenguerirA} and~\eqref{EqOctober13at23h16in2026BenguerirB}, respectively.
\end{lemma}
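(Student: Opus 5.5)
We need to recall the definitions from the referenced paper. The generalization error in Equation~11 is
\[
\overline{\overline{\mathsf{G}}}(P_{\vect{\Theta}|\vect{Z}},P_{\vect{Z}}) = \int\!\!\int \big(\mathsf{L}(\vect{\theta},P_Z) - \mathsf{L}_{\vect{z}}(\vect{\theta})\big)\, \mathrm{d}P_{\vect{\Theta}|\vect{Z}=\vect{z}}(\vect{\theta})\,\mathrm{d}P_{\vect{Z}}(\vect{z}),
\]
with the empirical risk $\mathsf{L}_{\vect{z}}(\vect{\theta}) = \frac{1}{n}\sum_m \ell(\vect{\theta},x_m,y_m)$. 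In the non-i.i.d.\ setting, the population term is the expected risk under the test distribution. I will take the test distribution to be $\bar{P}_Z$, interpreting $\mathsf{L}(\vect{\theta},\bar{P}_Z) = \int \ell(\vect{\theta},z)\,\mathrm{d}\bar{P}_Z(z)$; with i.i.d.\ data this reduces to $P_Z$. The functional in Equation~19 is
\[
\mathsf{G}(\vect{\theta},P_1,P_2) = \int \ell(\vect{\theta},z)\,\mathrm{d}P_1(z) - \int \ell(\vect{\theta},z)\,\mathrm{d}P_2(z).
\]
The plan is to rewrite each of the two terms of $\overline{\overline{\mathsf{G}}}$ as an integral over $P_{\vect{\Theta}}$ of a loss expectation, one against $\bar{P}_Z$ and the other against $\bar{P}_{Z|\vect{\Theta}=\vect{\theta}}$, and then subtract.

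\textbf{Population term.} The integrand $\mathsf{L}(\vect{\theta},\bar{P}_Z)$ depends on $\vect{z}$ only through $\vect{\theta}$. Viewing $P_{\vect{\Theta}|\vect{Z}}P_{\vect{Z}}$ as a joint measure on $\set{M}\times(\set{X}\times\set{Y})^n$ whose $\vect{\Theta}$-marginal is $P_{\vect{\Theta}}$ (Equation~28), Fubini--Tonelli gives
\[
\int\!\!\int \mathsf{L}(\vect{\theta},\bar{P}_Z)\,\mathrm{d}P_{\vect{\Theta}|\vect{Z}=\vect{z}}(\vect{\theta})\,\mathrm{d}P_{\vect{Z}}(\vect{z}) = \int \mathsf{L}(\vect{\theta},\bar{P}_Z)\,\mathrm{d}P_{\vect{\Theta}}(\vect{\theta}).
\]

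\textbf{Empirical term.} Here I use the disintegration of the same joint measure in the other order, $P_{\vect{Z}|\vect{\Theta}}P_{\vect{\Theta}}$ (Equation~35). Assumptions $(a)$ and $(b)$ make this legitimate: the Radon--Nikodym derivative $\frac{\mathrm{d}P_{\vect{\Theta}|\vect{Z}=\vect{z}}}{\mathrm{d}P_{\vect{\Theta}}}(\vect{\theta})$ exists and equals $\frac{\mathrm{d}P_{\vect{Z}|\vect{\Theta}=\vect{\theta}}}{\mathrm{d}P_{\vect{Z}}}(\vect{z})$, so both iterated integrals represent the same joint measure. Exchanging the order of integration yields
\[
\int \Big(\int \tfrac{1}{n}\sum_m \ell(\vect{\theta},z_m)\,\mathrm{d}P_{\vect{Z}|\vect{\Theta}=\vect{\theta}}(\vect{z})\Big)\mathrm{d}P_{\vect{\Theta}}(\vect{\theta}).
\]
For each $m$, the inner integral of $\ell(\vect{\theta},z_m)$ depends only on the $m$-th coordinate. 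By the definition of the marginal through cylinder sets $\set{B}_m$ (first for indicator functions, then simple functions, then by monotone convergence for nonnegative or integrable $\ell$), it equals $\int \ell(\vect{\theta},z)\,\mathrm{d}P_{Z_m|\vect{\Theta}=\vect{\theta}}(z)$. By linearity and \eqref{EqOctober13at23h16in2026BenguerirB}, the average over $m$ is $\int \ell(\vect{\theta},z)\,\mathrm{d}\bar{P}_{Z|\vect{\Theta}=\vect{\theta}}(z)$.

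\textbf{Conclusion.} If instead the population term is written with the marginals $P_{Z_m}$, the same cylinder-set argument combined with \eqref{EqOctober13at22h54in2026Benguerir} and \eqref{EqOctober13at23h16in2026BenguerirA} turns $\frac{1}{n}\sum_m\int\ell\,\mathrm{d}P_{Z_m}$ into $\int\ell\,\mathrm{d}\bar{P}_Z$. Subtracting the empirical term from the population term and using the definition of $\mathsf{G}$ then gives \eqref{EqOctober21at11h08in2024Agadir}.

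\textbf{Main obstacle.} The key step is justifying the exchange of integration order, i.e., that $P_{\vect{Z}|\vect{\Theta}}$ is a genuine disintegration of the joint law under assumptions $(a)$ and $(b)$. A secondary requirement is integrability of $\ell$, so that Fubini applies and the subtraction is not of the form $\infty-\infty$. I would handle this by assuming $\ell$ is integrable, as in the referenced paper, or by treating the positive and negative parts separately via Tonelli.
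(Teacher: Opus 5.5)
Your proposal is correct and follows essentially the same route as the paper. You split $\overline{\overline{\mathsf{G}}}$ into a population term, which reduces to $\int\ell\,\mathrm{d}\bar{P}_Z$ integrated against $P_{\vect{\Theta}}$, and an empirical term, where assumptions $(a)$--$(b)$ let you replace $\frac{\mathrm{d}P_{\vect{\Theta}|\vect{Z}=\vect{z}}}{\mathrm{d}P_{\vect{\Theta}}}(\vect{\theta})$ by $\frac{\mathrm{d}P_{\vect{Z}|\vect{\Theta}=\vect{\theta}}}{\mathrm{d}P_{\vect{Z}}}(\vect{z})$, swap the integrals, marginalize to $P_{Z_m|\vect{\Theta}=\vect{\theta}}$ and average.

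Two small points. Equation~11 defines the population term through an independent copy $\vect{u}\sim P_{\vect{Z}}$, so your ``if instead'' branch with $\frac{1}{n}\sum_m\int\ell\,\mathrm{d}P_{Z_m}$ is the one that actually applies, not the reading that takes the test distribution to be $\bar{P}_Z$ directly. Also, you pass from the average of integrals to the integral against the average by linearity of the integral in the measure, which is a slightly more direct substitute for the paper's averaging of the Radon--Nikodym derivatives $\frac{\mathrm{d}P_{Z_t}}{\mathrm{d}\bar{P}_Z}$.
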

\begin{IEEEproof}
	The proof is presented in Appendix~\ref{AppProofOfLemmaOctober13at23h53in2026Benguerir}.	
\end{IEEEproof}

Lemma~\ref{LemmaOctober13at23h53in2026Benguerir} is a generalization of~\cite[Lemma~6]{perlaza2024generalization} that does not rely on~\cite[Lemma~5]{perlaza2024generalization}.

\subsection{Step Two}
The second step of the method of data-driven gaps is formalized as follows.

\begin{lemma}\label{LemmaOctober14at16h00in2026Benguerir}  
Consider the generalization error~$\overline{\overline{\mathsf{G}}} \left(P_{\vect{\Theta} | \vect{Z}}, P_{\vect{Z}} \right)$ in~\cite[Equation~11]{perlaza2024generalization} and assume that: 
\begin{itemize}
\item[$(a)$]  For all~$\vect{z} \in \left( \set{X} \times \set{Y} \right)^n$, the probability measures~$P_{\vect{\Theta} | \vect{Z} = \vect{z}}$ and~$P_{\vect{\Theta}}$ in~\cite[Equation~28]{perlaza2024generalization} are mutually absolutely continuous; 
\item[$(b)$]  For all~$\vect{\theta} \in \set{M}$, the probability measures~$P_{\vect{Z} | \vect{\Theta} = \vect{\theta}}$ and~$P_{\vect{Z}}$ in~\cite[Equation~35]{perlaza2024generalization} are mutually absolutely continuous; and
\item[$(c)$] For all~$\vect{\theta} \in \set{M}$,  the probability measures~$\bar{P}_{Z}$ in~\eqref{EqOctober13at23h16in2026BenguerirA} and $\bar{P}_{Z | \vect{\Theta} = \vect{\theta}}$ in~\eqref{EqOctober13at23h16in2026BenguerirB} are both mutually absolutely continuous with respect to~$P_S \in \triangle\left( \set{X} \times \set{Y} \right)$ in~\cite[Equation~51]{perlaza2024generalization}.
\end{itemize}
Then, 
\begin{IEEEeqnarray}{rCl}
		\nonumber
& &\overline{\overline{\mathsf{G}}}(P_{\vect{\Theta} | \vect{Z}}, P_{\vect{Z}} )  
		= \beta \int \Bigg( \KL{\bar{P}_{Z | \vect{\Theta} = \vect{\theta}}}{P_{\hat{Z} | \vect{\Theta} = \vect{\theta}}^{\left( P_{S}, \beta \right)} }-\KL{\bar{P}_{Z}}{P_{\hat{Z} | \vect{\Theta} = \vect{\theta}}^{\left( P_{S}, \beta \right)} }\Tsupersqueezeequ \\
\label{EqOctober19at18h32in2024InTheAirFlyingToAgadir}
		& & - \KL{\bar{P}_{Z | \vect{\Theta} = \vect{\theta}}}{P_S} + \KL{\bar{P}_Z}{P_S}  \Bigg) \mathrm{d} P_{\vect{\Theta}}\left( \vect{\theta} \right), \squeezeequ \IEEEeqnarraynumspace
	\end{IEEEeqnarray}
where the probability measure~$P_{\hat{Z} | \vect{\Theta} = \vect{\theta}}^{\left( P_{S}, \beta \right)}~$ is defined in~\cite[Equation~51]{perlaza2024generalization}; and~$\beta \in \set{J}_{P_S}$, with~$\set{J}_{P_S}$ in~\cite[Equation~50]{perlaza2024generalization}.
\end{lemma}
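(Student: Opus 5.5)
Lemma~\ref{LemmaOctober14at16h00in2026Benguerir} follows by combining Lemma~\ref{LemmaOctober13at23h53in2026Benguerir} with a pointwise (in $\vect{\theta}$) identity for the functional $\mathsf{G}$ from the original paper. Assumptions $(a)$ and $(b)$ are exactly those of Lemma~\ref{LemmaOctober13at23h53in2026Benguerir}. Hence the generalization error equals $\int \mathsf{G}( \vect{\theta}, \bar{P}_{Z}, \bar{P}_{Z | \vect{\Theta} = \vect{\theta}}) \mathrm{d} P_{\vect{\Theta}}(\vect{\theta})$. It then suffices to show that, for every fixed $\vect{\theta}\in\set{M}$, the integrand $\mathsf{G}( \vect{\theta}, \bar{P}_{Z}, \bar{P}_{Z | \vect{\Theta} = \vect{\theta}})$ equals $\beta$ times the bracketed combination of four relative entropies in~\eqref{EqOctober19at18h32in2024InTheAirFlyingToAgadir}.

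For the pointwise identity, I would invoke the result of~\cite{perlaza2024generalization} that expresses $\mathsf{G}(\vect{\theta}, P_1, P_2)$ through the worst-case data-generating measure $P_{\hat{Z} | \vect{\Theta} = \vect{\theta}}^{( P_{S}, \beta )}$ of~\cite[Equation~51]{perlaza2024generalization}. This is the theorem underlying~\cite[Lemma~7]{perlaza2024generalization}, valid for arbitrary $P_1,P_2$ mutually absolutely continuous with $P_S$. That result does not depend on~\cite[Lemma~5]{perlaza2024generalization}, so it can be applied with $P_1=\bar{P}_{Z}$ and $P_2=\bar{P}_{Z | \vect{\Theta} = \vect{\theta}}$, using assumption $(c)$.

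If a self-contained derivation is preferred, I would compute it directly. The density of the worst-case measure is
\begin{equation*}
\frac{\mathrm{d}P_{\hat{Z} | \vect{\Theta} = \vect{\theta}}^{( P_{S}, \beta )}}{\mathrm{d}P_S}(z)=\exp\left(\frac{\ell(\vect{\theta},z)}{\beta}-K(\vect{\theta},\beta)\right),
\end{equation*}
with $K$ the log-partition function, finite because $\beta\in\set{J}_{P_S}$. For any $Q\ll P_S$ with $P_S\ll Q$ one has
\begin{equation*}
\KL{Q}{P_{\hat{Z} | \vect{\Theta} = \vect{\theta}}^{(P_S,\beta)}}-\KL{Q}{P_S}=-\frac{1}{\beta}\int \ell(\vect{\theta},z)\,\mathrm{d}Q(z)+K(\vect{\theta},\beta).
\end{equation*}
Take the difference of this identity at $Q=\bar{P}_{Z | \vect{\Theta} = \vect{\theta}}$ and at $Q=\bar{P}_Z$, and multiply by $\beta$. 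The constant $K$ cancels, leaving
\begin{equation*}
\int\ell(\vect{\theta},z)\,\mathrm{d}\bar{P}_Z(z)-\int\ell(\vect{\theta},z)\,\mathrm{d}\bar{P}_{Z|\vect{\Theta}=\vect{\theta}}(z),
\end{equation*}
which is $\mathsf{G}(\vect{\theta},\bar{P}_{Z},\bar{P}_{Z|\vect{\Theta}=\vect{\theta}})$ by~\cite[Equation~19]{perlaza2024generalization}. Integrating over $P_{\vect{\Theta}}$ then yields~\eqref{EqOctober19at18h32in2024InTheAirFlyingToAgadir}.

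The main obstacle is integrability. Splitting the log of the density into the expected-loss term and $K$ requires $\int \ell(\vect{\theta},z)\,\mathrm{d}Q(z)$ to be finite, so that no $\infty-\infty$ arises. Integrating the resulting pointwise identity over $P_{\vect{\Theta}}$ also requires the four-term integrand to be $P_{\vect{\Theta}}$-integrable. I would handle this as in~\cite{perlaza2024generalization}, where finiteness of the expected losses is implicit in the existence of $\overline{\overline{\mathsf{G}}}$. Note that $\int \ell\, \mathrm{d}\bar{P}_{Z|\vect{\Theta}=\vect{\theta}}$ averaged over $P_{\vect{\Theta}}$ equals the expected empirical risk, and $\int \ell\,\mathrm{d}\bar{P}_Z$ averaged over $P_{\vect{\Theta}}$ is the expected population risk. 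Both are finite whenever $\overline{\overline{\mathsf{G}}}$ is well defined, so the linear decomposition and the final integration are justified.
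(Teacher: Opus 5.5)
Your proposal is correct and is essentially the paper's proof: invoke Lemma~\ref{LemmaOctober13at23h53in2026Benguerir} under $(a)$ and $(b)$, apply the pointwise identity of \cite[Theorem~32]{perlaza2024generalization} to $\bar{P}_{Z}$ and $\bar{P}_{Z | \vect{\Theta} = \vect{\theta}}$ under $(c)$, and integrate against $P_{\vect{\Theta}}$. Your explicit Gibbs-density computation, in which the log-partition term cancels, and your integrability remarks only unpack what the paper takes from that cited theorem as a black box.
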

\begin{IEEEproof}
From Assumptions~$(a)$ and~$(b)$, Lemma~\ref{LemmaOctober13at23h53in2026Benguerir} holds. 
From Assumptions~$(c)$, it follows from~\cite[Theorem~32]{perlaza2024generalization}, that for all $\vect{\theta} \in \set{M}$,
\begin{IEEEeqnarray}{rCl}
		\nonumber
& &\mathsf{G}\left( \vect{\theta}, \bar{P}_{Z}, \bar{P}_{Z | \vect{\Theta} = \vect{\theta}}\right) 
		= \beta \Bigg( \KL{\bar{P}_{Z | \vect{\Theta} = \vect{\theta}}}{P_{\hat{Z} | \vect{\Theta} = \vect{\theta}}^{\left( P_{S}, \beta \right)} }-\KL{\bar{P}_{Z}}{P_{\hat{Z} | \vect{\Theta} = \vect{\theta}}^{\left( P_{S}, \beta \right)} }\Tsupersqueezeequ \\
\label{EqOctober16at17h29in2026InBenguerir}
		& & - \KL{\bar{P}_{Z | \vect{\Theta} = \vect{\theta}}}{P_S} + \KL{\bar{P}_Z}{P_S}  \Bigg), \squeezeequ \IEEEeqnarraynumspace
	\end{IEEEeqnarray}
where the functional~$\mathsf{G}$ is defined in~\cite[Equation~19]{perlaza2024generalization}.
Hence, using~\eqref{EqOctober16at17h29in2026InBenguerir} in~\eqref{EqOctober21at11h08in2024Agadir} completes the proof.
\end{IEEEproof}

Lemma~\ref{LemmaOctober14at16h00in2026Benguerir} is a generalization of~\cite[Lemma~7]{perlaza2024generalization} that does not rely on~\cite[Lemma~5]{perlaza2024generalization}.
  
\section{Explicit Expressions Obtained Via the Method of Data-Driven Gaps}\label{SecExpressionsViaDataDrivenGaps}

All explicit expressions for the generalization error $\overline{\overline{\mathsf{G}}}\left(P_{\vect{\Theta}|\vect{Z}},P_{\vect{Z}}\right)$ defined in~\cite[Equation~11]{perlaza2024generalization} and presented in Section~VI of~\cite{perlaza2024generalization} can be generalized using Lemmas~\ref{LemmaOctober13at23h53in2026Benguerir} and~\ref{LemmaOctober14at16h00in2026Benguerir}, without relying on~\cite[Lemma~5]{perlaza2024generalization}.

The key distinction between the generalized expressions and their original counterparts lies in the probability measures involved. Specifically, the generalized expressions do not explicitly involve data-point-specific marginal measures of $P_{\vect{Z}}$ or data-point-specific conditional marginal measures of $P_{\vect{Z}|\vect{\Theta}}$. Instead, they are formulated in terms of the average of all data-point-specific marginal measures, namely, the measure $\bar{P}_{Z}$ defined in~\eqref{EqOctober13at23h16in2026BenguerirA}, and the average of all data-point-specific conditional marginal measures, namely, the measure $\bar{P}_{Z|\vect{\Theta}}$ defined in~\eqref{EqOctober13at23h16in2026BenguerirB}. Consequently, the training data points are required neither to be independent nor to be identically distributed, and their conditional marginal distributions given the learned model are not required to be identical.

The proofs of these generalized expressions follow the same lines as those presented in~\cite{perlaza2024generalization}, with only minor algebraic modifications. Accordingly, only the steps that differ from the original proofs are detailed.
Furthermore, since the analysis of the various equivalent expressions for the generalization error $\overline{\overline{\mathsf{G}}}\left(P_{\vect{\Theta}|\vect{Z}},P_{\vect{Z}}\right)$ presented in~\cite{perlaza2024generalization} extends naturally to their generalized counterparts, no additional analysis is provided. Instead, the presentation is focused on clearly stating the generalized results and establishing their validity through concise proofs.

The following is a generalization of~\cite[Theorem~16]{perlaza2024generalization}.

\begin{theorem}\label{TheoremMismatchP_S}
Consider the generalization error~$\overline{\overline{\mathsf{G}}} \left(P_{\vect{\Theta} | \vect{Z}}, P_{\vect{Z}} \right)$ in~\cite[Equation~11]{perlaza2024generalization} and assume that: 
\begin{itemize}
\item[$(a)$]  For all~$\vect{z} \in \left( \set{X} \times \set{Y} \right)^n$, the probability measures~$P_{\vect{\Theta} | \vect{Z} = \vect{z}}$ and~$P_{\vect{\Theta}}$ in~\cite[Equation~28]{perlaza2024generalization} are mutually absolutely continuous; 
\item[$(b)$]  For all~$\vect{\theta} \in \set{M}$, the probability measures~$P_{\vect{Z} | \vect{\Theta} = \vect{\theta}}$ and~$P_{\vect{Z}}$ in~\cite[Equation~35]{perlaza2024generalization} are mutually absolutely continuous; and
\item[$(c)$] For all~$\vect{\theta} \in \set{M}$,  the probability measures~$\bar{P}_{Z}$ in~\eqref{EqOctober13at23h16in2026BenguerirA} and $\bar{P}_{Z | \vect{\Theta} = \vect{\theta}}$ in~\eqref{EqOctober13at23h16in2026BenguerirB} are both mutually absolutely continuous with respect to~$P_S \in \triangle\left( \set{X} \times \set{Y} \right)$ in~\cite[Equation~51]{perlaza2024generalization}.
\end{itemize}
Then, 
	\begin{IEEEeqnarray}{rCl}
		\nonumber
		&&\overline{\overline{\mathsf{G}}}(P_{\vect{\Theta} | \vect{Z}}, P_{\vect{Z}} )  \\
		\nonumber
		&=& - \beta \Bigg(\iint \log{\left(\frac{\mathrm{d}P_{\hat{Z} | \vect{\Theta} = \vect{\theta}}^{\left( P_{S}, \beta \right)} }{\mathrm{d}P_S}(z)\right)} \mathrm{d}\bar{P}_{Z|\vect{\Theta} = \vect{\theta}}(z)\mathrm{d}P_{\vect{\Theta}}\left(\vect{\theta}\right) \\
		\label{EqMismatchP_S}
		&&+  \iint \log{\left(\frac{\mathrm{d}P_{\hat{Z} | \vect{\Theta} = \vect{\theta}}^{\left( P_{S}, \beta \right)} }{\mathrm{d}P_S}(z)\right)} \mathrm{d}\bar{P}_{Z}(z)\mathrm{d}P_{\vect{\Theta}}\left(\vect{\theta}\right)\Bigg),
	\end{IEEEeqnarray}
where the probability measure~$P_{\hat{Z} | \vect{\Theta} = \vect{\theta}}^{\left( P_{S}, \beta \right)}~$ is defined in~\cite[Equation~51]{perlaza2024generalization}; and~$\beta \in \set{J}_{P_S}$, with~$\set{J}_{P_S}$ in~\cite[Equation~50]{perlaza2024generalization}.
\end{theorem}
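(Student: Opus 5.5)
The plan is to start from Lemma~\ref{LemmaOctober14at16h00in2026Benguerir}, whose assumptions $(a)$--$(c)$ coincide with those of the theorem. That lemma expresses $\overline{\overline{\mathsf{G}}}(P_{\vect{\Theta}|\vect{Z}},P_{\vect{Z}})$ as $\beta$ times the $P_{\vect{\Theta}}$-integral of a combination of four relative entropies. These involve the average measures $\bar{P}_{Z|\vect{\Theta}=\vect{\theta}}$ and $\bar{P}_Z$, measured against $P_{\hat{Z}|\vect{\Theta}=\vect{\theta}}^{(P_S,\beta)}$ and against $P_S$. The goal is then purely algebraic: combine these terms pairwise using the chain rule for Radon--Nikodym derivatives, exactly as in the proof of~\cite[Theorem~16]{perlaza2024generalization}, with $P_{Z_m|\vect{\Theta}}$ and $P_Z$ replaced by $\bar{P}_{Z|\vect{\Theta}}$ and $\bar{P}_Z$.

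Concretely, fix $\vect{\theta}\in\set{M}$. Assumption $(c)$ gives $\bar{P}_{Z|\vect{\Theta}=\vect{\theta}}\ll P_S$. Since $P_{\hat{Z}|\vect{\Theta}=\vect{\theta}}^{(P_S,\beta)}$ is defined in~\cite[Equation~51]{perlaza2024generalization} through a strictly positive density with respect to $P_S$, the two measures $P_{\hat{Z}|\vect{\Theta}=\vect{\theta}}^{(P_S,\beta)}$ and $P_S$ are mutually absolutely continuous. Hence, $\bar{P}_{Z|\vect{\Theta}=\vect{\theta}}$-almost everywhere,
\begin{equation*}
\log\frac{\mathrm{d}\bar{P}_{Z|\vect{\Theta}=\vect{\theta}}}{\mathrm{d}P_{\hat{Z}|\vect{\Theta}=\vect{\theta}}^{(P_S,\beta)}}(z)=\log\frac{\mathrm{d}\bar{P}_{Z|\vect{\Theta}=\vect{\theta}}}{\mathrm{d}P_S}(z)-\log\frac{\mathrm{d}P_{\hat{Z}|\vect{\Theta}=\vect{\theta}}^{(P_S,\beta)}}{\mathrm{d}P_S}(z).
\end{equation*}
Integrating this identity with respect to $\bar{P}_{Z|\vect{\Theta}=\vect{\theta}}$ gives the following: the difference of the first and third relative entropies in~\eqref{EqOctober16at17h29in2026InBenguerir} equals $-\int\log\frac{\mathrm{d}P_{\hat{Z}|\vect{\Theta}=\vect{\theta}}^{(P_S,\beta)}}{\mathrm{d}P_S}(z)\,\mathrm{d}\bar{P}_{Z|\vect{\Theta}=\vect{\theta}}(z)$. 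The same computation with $\bar{P}_Z$ in place of $\bar{P}_{Z|\vect{\Theta}=\vect{\theta}}$ shows that the second and fourth terms combine into $+\int\log\frac{\mathrm{d}P_{\hat{Z}|\vect{\Theta}=\vect{\theta}}^{(P_S,\beta)}}{\mathrm{d}P_S}(z)\,\mathrm{d}\bar{P}_Z(z)$.

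Finally, integrate the resulting pointwise identity with respect to $P_{\vect{\Theta}}$ and factor out $-\beta$. This yields~\eqref{EqMismatchP_S} with the signs as stated.

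The only genuinely delicate step is justifying the splitting of relative entropies into differences of integrals. This requires the individual integrals to be finite, or at least to avoid $\infty-\infty$. I would handle it as in~\cite{perlaza2024generalization}: the condition $\beta\in\set{J}_{P_S}$ guarantees that the log-density $\log\frac{\mathrm{d}P_{\hat{Z}|\vect{\Theta}=\vect{\theta}}^{(P_S,\beta)}}{\mathrm{d}P_S}$ is given explicitly by $-\ell/\beta$ minus a finite log-partition term. Its integrability against $\bar{P}_{Z|\vect{\Theta}=\vect{\theta}}$ and $\bar{P}_Z$ then follows from the finiteness of the expected losses implicit in the definition of $\overline{\overline{\mathsf{G}}}$ and in Lemma~\ref{LemmaOctober14at16h00in2026Benguerir}. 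Linearity of the averages in~\eqref{EqOctober13at23h16in2026BenguerirA} and~\eqref{EqOctober13at23h16in2026BenguerirB} transfers these integrability properties from the data-point-specific measures to their averages.
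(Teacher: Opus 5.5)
Your proposal follows the paper's proof: you apply Lemma~\ref{LemmaOctober14at16h00in2026Benguerir} and combine the relative entropies pairwise by the Radon--Nikodym chain rule, which gives exactly \eqref{EqOctober16at12h28in2026BenguerirA} and \eqref{EqOctober16at12h28in2026BenguerirB}, and your integrability discussion is more explicit than the paper's bare reference to \cite[Theorem~16]{perlaza2024generalization}. One caveat on signs: writing $g_{\vect{\theta}} = \log\frac{\mathrm{d}P^{(P_S,\beta)}_{\hat{Z}|\vect{\Theta}=\vect{\theta}}}{\mathrm{d}P_S}$, those two identities yield $\overline{\overline{\mathsf{G}}}(P_{\vect{\Theta}|\vect{Z}},P_{\vect{Z}}) = -\beta\left(\iint g_{\vect{\theta}}\,\mathrm{d}\bar{P}_{Z|\vect{\Theta}=\vect{\theta}}\,\mathrm{d}P_{\vect{\Theta}}(\vect{\theta}) - \iint g_{\vect{\theta}}\,\mathrm{d}\bar{P}_{Z}\,\mathrm{d}P_{\vect{\Theta}}(\vect{\theta})\right)$, so the ``$+$'' inside the parentheses of \eqref{EqMismatchP_S} is a sign typo (the paper's proof also passes over it), and you should not claim the signs come out ``as stated''. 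As a consistency check, the measure is worst-case, so $g_{\vect{\theta}}$ equals $+\ell/\beta$ minus the log-partition term, not $-\ell/\beta$ as you wrote (harmless for integrability); with the minus sign the partition terms cancel and leave $\iint \ell\,\mathrm{d}\bar{P}_{Z}\,\mathrm{d}P_{\vect{\Theta}} - \iint \ell\,\mathrm{d}\bar{P}_{Z|\vect{\Theta}=\vect{\theta}}\,\mathrm{d}P_{\vect{\Theta}}$, which is the generalization error.
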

\begin{IEEEproof}
From Assumptions~$(a)$, $(b)$ and~$(c)$, Lemma~\ref{LemmaOctober14at16h00in2026Benguerir} holds. Then, by following the same steps as in the proof of~\cite[Theorem~16]{perlaza2024generalization}, it follows that the differences~$\KL{\bar{P}_{Z|\vect{\Theta}=\vect{\theta}}}{P_{\hat{Z} | \vect{\Theta} = \vect{\theta}}^{\left( P_{S}, \beta \right)} } - \KL{\bar{P}_{Z|\vect{\Theta}=\vect{\theta}}}{P_S}$ and~$\KL{\bar{P}_Z}{P_S} -\KL{\bar{P}_Z}{P_{\hat{Z} | \vect{\Theta} = \vect{\theta}}^{\left( P_{S}, \beta \right)} }$ in~\eqref{EqOctober19at18h32in2024InTheAirFlyingToAgadir} respectively satisfy the following equalities:
	\begin{IEEEeqnarray}{rCl} 
		\nonumber
	&&	\KL{\bar{P}_{Z|\vect{\Theta}=\vect{\theta}}}{P_{\hat{Z} | \vect{\Theta} = \vect{\theta}}^{\left( P_{S}, \beta \right)} } - \KL{\bar{P}_{Z|\vect{\Theta}=\vect{\theta}}}{P_S}\\
\label{EqOctober16at12h28in2026BenguerirA}
		&=&- \int \log{\frac{\mathrm{d}P_{\hat{Z} | \vect{\Theta} = \vect{\theta}}^{\left( P_{S}, \beta \right)} }{\mathrm{d}P_S}\left(z\right)} \mathrm{d}\bar{P}_{Z|\vect{\Theta}=\vect{\theta}}\left(z\right),
	\end{IEEEeqnarray}
and
	\begin{IEEEeqnarray}{rCl}
		\nonumber
&&\KL{\bar{P}_Z}{P_S} -\KL{\bar{P}_Z}{P_{\hat{Z} | \vect{\Theta} = \vect{\theta}}^{\left( P_{S}, \beta \right)} }\\
\label{EqOctober16at12h28in2026BenguerirB}
&=&\int \log{\frac{\mathrm{d}P_{\hat{Z} | \vect{\Theta} = \vect{\theta}}^{\left( P_{S}, \beta \right)} }{\mathrm{d}P_S}\left(z\right)} \mathrm{d}\bar{P}_{Z}\left(z\right).
	\end{IEEEeqnarray}
Plugging both~\eqref{EqOctober16at12h28in2026BenguerirA} and~\eqref{EqOctober16at12h28in2026BenguerirB} into~\eqref{EqOctober19at18h32in2024InTheAirFlyingToAgadir} yields~\eqref{EqMismatchP_S}, which completes the proof.
\end{IEEEproof}

The following is a generalization of~\cite[Theorem~17]{perlaza2024generalization}.

\begin{theorem}\label{TheoDecember23at7h51in2025atSophia}
Consider the generalization error~$\overline{\overline{\mathsf{G}}} \left(P_{\vect{\Theta} | \vect{Z}}, P_{\vect{Z}} \right)$ in~\cite[Equation~11]{perlaza2024generalization} and assume that: 
\begin{itemize}
\item[$(a)$]  For all~$\vect{z} \in \left( \set{X} \times \set{Y} \right)^n$, the probability measures~$P_{\vect{\Theta} | \vect{Z} = \vect{z}}$ and~$P_{\vect{\Theta}}$ in~\cite[Equation~28]{perlaza2024generalization} are mutually absolutely continuous; 
\item[$(b)$]  For all~$\vect{\theta} \in \set{M}$, the probability measures~$P_{\vect{Z} | \vect{\Theta} = \vect{\theta}}$ and~$P_{\vect{Z}}$ in~\cite[Equation~35]{perlaza2024generalization} are mutually absolutely continuous; and
\item[$(c)$] For all~$\vect{\theta} \in \set{M}$,  the probability measures~$\bar{P}_{Z}$ in~\eqref{EqOctober13at23h16in2026BenguerirA} and $\bar{P}_{Z | \vect{\Theta} = \vect{\theta}}$ in~\eqref{EqOctober13at23h16in2026BenguerirB} are both mutually absolutely continuous with respect to~$P_S \in \triangle\left( \set{X} \times \set{Y} \right)$ in~\cite[Equation~51]{perlaza2024generalization}.
\end{itemize}
Then,  
\begin{IEEEeqnarray}{C}
\nonumber
- \beta \left(  \bar{\eta}  - \eta \right) \bar{P}_{Z | \vect{\Theta} } P_{\vect{\Theta}} \left( \set{C}_{\eta} \right)  - \beta \left( \eta - \underline{\eta} \right) \bar{P}_{Z}P_{\vect{\Theta}} \left( \set{C}_{\eta}^{\sfc} \right)\\ 
\nonumber
 \leqslant \overline{\overline{\mathsf{G}}} \left(P_{\vect{\Theta} | \vect{Z}}, P_{\vect{Z}} \right) \leqslant\\
\label{EqDecember23at7h53in2025atSophia}
\beta \left(  \eta -  \underline{\eta} \right) \bar{P}_{Z | \vect{\Theta} } P_{\vect{\Theta}} \left( \set{C}_{\eta}^{\sfc}\right)  + \beta \left( \bar{\eta}  -  \eta \right) \bar{P}_{Z}P_{\vect{\Theta}} \left( \set{C}_{\eta} \right), \spnum 
\end{IEEEeqnarray}
where $\eta$ satisfies
\begin{IEEEeqnarray}{rCl}
\underline{\eta} & \leqslant \eta \leqslant & \bar{\eta},
\end{IEEEeqnarray}
and $\bar{\eta}$ and $\underline{\eta}$ are respectively defined in~\cite[Equations~125 and~126]{perlaza2024generalization}; and the set $\set{C}_{\eta}$ is defined in~\cite[Equation~122]{perlaza2024generalization}. 
\end{theorem}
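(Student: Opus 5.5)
We start from Theorem~\ref{TheoremMismatchP_S}, whose hypotheses coincide with $(a)$--$(c)$. Introduce the function $f(\vect{\theta},z) = \beta\log\frac{\mathrm{d}P_{\hat{Z} | \vect{\Theta} = \vect{\theta}}^{\left( P_{S}, \beta \right)}}{\mathrm{d}P_S}(z)$ on $\set{M}\times\set{X}\times\set{Y}$. Then \eqref{EqMismatchP_S} reads
\begin{IEEEeqnarray}{rCl}
\overline{\overline{\mathsf{G}}}(P_{\vect{\Theta} | \vect{Z}}, P_{\vect{Z}} ) & = & \int f \,\mathrm{d}\bar{P}_{Z}P_{\vect{\Theta}} - \int f\, \mathrm{d}\bar{P}_{Z | \vect{\Theta}}P_{\vect{\Theta}},
\end{IEEEeqnarray}
where $\bar{P}_{Z|\vect{\Theta}}P_{\vect{\Theta}}$ denotes the joint measure on $\set{M}\times\set{X}\times\set{Y}$ built from the average conditional marginal \eqref{EqOctober13at23h16in2026BenguerirB}, and $\bar{P}_Z P_{\vect{\Theta}}$ is the product measure. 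Both are probability measures, because averages of (conditional) probability measures are again probability measures. This is the only place where the averaged measures replace the data-point-specific ones.

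I would then recall from \cite[Equations~122, 125 and~126]{perlaza2024generalization} that the quantities satisfy the following. $\set{C}_\eta$ is the set of pairs $(\vect{\theta},z)$ on which the (normalized) log-likelihood ratio, which up to the factor $\beta$ is $f$, is at least $\eta$. The constants $\bar{\eta}$ and $\underline{\eta}$ are its essential supremum and infimum, hence $\underline{\eta}\leqslant f/\beta\leqslant\bar{\eta}$. I would verify that these definitions involve only the function itself and not the marginals, so they remain valid verbatim. The bound then follows exactly as in \cite[Theorem~17]{perlaza2024generalization}.

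For the upper bound, split each integral over $\set{C}_\eta$ and $\set{C}_\eta^{\sfc}$. On $\set{C}_\eta$ the function satisfies $\eta\leqslant f/\beta\leqslant\bar{\eta}$, and on $\set{C}_\eta^{\sfc}$ it satisfies $\underline{\eta}\leqslant f/\beta<\eta$. This gives
\begin{IEEEeqnarray}{rCl}
\int f\,\mathrm{d}\bar{P}_ZP_{\vect{\Theta}} &\leqslant& \beta\bar{\eta}\,\bar{P}_ZP_{\vect{\Theta}}(\set{C}_\eta)+\beta\eta\,\bar{P}_ZP_{\vect{\Theta}}(\set{C}_\eta^{\sfc}),
\end{IEEEeqnarray}
and
\begin{IEEEeqnarray}{rCl}
\int f\,\mathrm{d}\bar{P}_{Z|\vect{\Theta}}P_{\vect{\Theta}} &\geqslant& \beta\eta\,\bar{P}_{Z|\vect{\Theta}}P_{\vect{\Theta}}(\set{C}_\eta)+\beta\underline{\eta}\,\bar{P}_{Z|\vect{\Theta}}P_{\vect{\Theta}}(\set{C}_\eta^{\sfc}).
\end{IEEEeqnarray}
Subtracting and using $P(\set{C}_\eta^{\sfc})=1-P(\set{C}_\eta)$ for both measures, the $\beta\eta$ terms combine. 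The difference becomes $\beta(\bar{\eta}-\eta)\bar{P}_ZP_{\vect{\Theta}}(\set{C}_\eta)+\beta(\eta-\underline{\eta})\bar{P}_{Z|\vect{\Theta}}P_{\vect{\Theta}}(\set{C}_\eta^{\sfc})$, which is the right-hand side of \eqref{EqDecember23at7h53in2025atSophia}. The lower bound is symmetric: bound the first integral from below and the second from above, which yields $-\beta(\bar{\eta}-\eta)\bar{P}_{Z|\vect{\Theta}}P_{\vect{\Theta}}(\set{C}_\eta)-\beta(\eta-\underline{\eta})\bar{P}_ZP_{\vect{\Theta}}(\set{C}_\eta^{\sfc})$.

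The main obstacle is making sure the essential bounds $\underline{\eta},\bar{\eta}$ hold almost everywhere with respect to both $\bar{P}_ZP_{\vect{\Theta}}$ and $\bar{P}_{Z|\vect{\Theta}}P_{\vect{\Theta}}$. I would handle this with assumption $(c)$ together with $(b)$. Each of these measures is absolutely continuous with respect to $P_SP_{\vect{\Theta}}$, or with respect to the reference measure used in \cite{perlaza2024generalization}. Absolute continuity is preserved under averaging over $m$, so null sets for the reference measure are null for both measures. I would also check that if $\bar{\eta}$ or $\underline{\eta}$ is infinite the bounds remain meaningful, with the conventions of the original paper. Finally, the integrability needed to split the integrals is inherited from finiteness of the KL terms in Lemma~\ref{LemmaOctober14at16h00in2026Benguerir}.
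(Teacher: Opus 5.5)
Your proposal is correct and takes essentially the same route as the paper, which defers to the proof of \cite[Theorem~17]{perlaza2024generalization}: start from Theorem~\ref{TheoremMismatchP_S}, write the generalization error as the difference of the expectations of $\beta\log\frac{\mathrm{d}P_{\hat{Z}|\vect{\Theta}=\vect{\theta}}^{(P_S,\beta)}}{\mathrm{d}P_S}$ under $\bar{P}_Z P_{\vect{\Theta}}$ and $\bar{P}_{Z|\vect{\Theta}}P_{\vect{\Theta}}$, and then split each integral over $\set{C}_\eta$ and $\set{C}_\eta^{\sfc}$. Your reading of \eqref{EqMismatchP_S} as a difference of the two double integrals is the right one: the ``$+$'' in that display is a typo, as \eqref{EqOctober16at12h28in2026BenguerirA}--\eqref{EqOctober16at12h28in2026BenguerirB} in the proof of Theorem~\ref{TheoremMismatchP_S} show.
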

\begin{IEEEproof}
The proof follows the same steps as those in the proof of~\cite[Theorem~17]{perlaza2024generalization}.
\end{IEEEproof}

The following is a generalization of~\cite[Theorem~18]{perlaza2024generalization}.

\begin{theorem}\label{Corollarycomparisonnegative}
Consider the generalization error~$\overline{\overline{\mathsf{G}}} \left(P_{\vect{\Theta} | \vect{Z}}, P_{\vect{Z}} \right)$ in~\cite[Equation~11]{perlaza2024generalization} and assume that: 
\begin{itemize}
\item[$(a)$]  For all~$\vect{z} \in \left( \set{X} \times \set{Y} \right)^n$, the probability measures~$P_{\vect{\Theta} | \vect{Z} = \vect{z}}$ and~$P_{\vect{\Theta}}$ in~\cite[Equation~28]{perlaza2024generalization} are mutually absolutely continuous; 
\item[$(b)$]  For all~$\vect{\theta} \in \set{M}$, the probability measures~$P_{\vect{Z} | \vect{\Theta} = \vect{\theta}}$ and~$P_{\vect{Z}}$ in~\cite[Equation~35]{perlaza2024generalization} are mutually absolutely continuous; and
\item[$(c)$] For all~$\vect{\theta} \in \set{M}$,  the probability measures~$\bar{P}_{Z}$ in~\eqref{EqOctober13at23h16in2026BenguerirA} and $\bar{P}_{Z | \vect{\Theta} = \vect{\theta}}$ in~\eqref{EqOctober13at23h16in2026BenguerirB} are both mutually absolutely continuous with respect to~$P_S \in \triangle\left( \set{X} \times \set{Y} \right)$ in~\cite[Equation~51]{perlaza2024generalization}.
\end{itemize}
Then,
	\begin{IEEEeqnarray}{rCl}
		\nonumber
		&&\overline{\overline{\mathsf{G}}}(P_{\vect{\Theta} | \vect{Z}},P_{\vect{Z}} )= - \beta \left(  I\left(\bar{P}_{Z | \vect{\Theta}}; P_{\vect{\Theta}} \right)  +  L\left(\bar{P}_{Z | \vect{\Theta}}; P_{\vect{\Theta}} \right) \right)\squeezeequ\\
		\nonumber
		&&- \beta \iint \log{\left(\frac{\mathrm{d}P_{\hat{Z} | \vect{\Theta} = \vect{\theta}}^{\left( P_{S}, \beta \right)} }{\mathrm{d}\bar{P}_{Z | \vect{\Theta} = \vect{\theta}}}(z)\right)} \mathrm{d}\bar{P}_{Z | \vect{\Theta} = \vect{\theta}}\left(z\right)\mathrm{d}P_{\vect{\Theta}}\left(\vect{\theta}\right)\\
		\label{EqNovember17at11h16in2024Nice}
		&&+\beta \iint \log{\left(\frac{\mathrm{d}P_{\hat{Z} | \vect{\Theta} = \vect{\theta}}^{\left( P_{S}, \beta \right)} }{\mathrm{d}\bar{P}_{Z | \vect{\Theta} = \vect{\theta}}}(z)\right)} \mathrm{d}\bar{P}_{Z}\left(z\right)\mathrm{d}P_{\vect{\Theta}}\left(\vect{\theta}\right),\spnum
	\end{IEEEeqnarray}
where the probability measure~$P_{\hat{Z} | \vect{\Theta} = \vect{\theta}}^{\left( P_{S}, \beta \right)}~$ is defined in~\cite[Equation~51]{perlaza2024generalization}; and~$\beta \in \set{J}_{P_S}$, with~$\set{J}_{P_S}$ in~\cite[Equation~50]{perlaza2024generalization}.
\end{theorem}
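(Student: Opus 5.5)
Starting from Theorem~\ref{TheoremMismatchP_S}, whose hypotheses coincide with $(a)$, $(b)$ and $(c)$ here, we have
\begin{IEEEeqnarray}{rCl}
\nonumber
\overline{\overline{\mathsf{G}}}(P_{\vect{\Theta} | \vect{Z}}, P_{\vect{Z}} ) & = & - \beta \iint \log\left(\frac{\mathrm{d}P_{\hat{Z} | \vect{\Theta} = \vect{\theta}}^{\left( P_{S}, \beta \right)} }{\mathrm{d}P_S}(z)\right) \mathrm{d}\bar{P}_{Z|\vect{\Theta} = \vect{\theta}}(z)\mathrm{d}P_{\vect{\Theta}}\left(\vect{\theta}\right) \\
& & + \beta \iint \log\left(\frac{\mathrm{d}P_{\hat{Z} | \vect{\Theta} = \vect{\theta}}^{\left( P_{S}, \beta \right)} }{\mathrm{d}P_S}(z)\right) \mathrm{d}\bar{P}_{Z}(z)\mathrm{d}P_{\vect{\Theta}}\left(\vect{\theta}\right),
\end{IEEEeqnarray}
where the sign convention is the one fixed by the proof of Theorem~\ref{TheoremMismatchP_S}. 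The plan is to change the reference measure in each logarithm from $P_S$ to $\bar{P}_{Z|\vect{\Theta}=\vect{\theta}}$. By assumption $(c)$, $\bar{P}_{Z|\vect{\Theta}=\vect{\theta}}$ and $P_S$ are mutually absolutely continuous, so the chain rule for Radon--Nikodym derivatives gives, for all $\vect{\theta}$ and $P_S$-almost all $z$,
\begin{IEEEeqnarray}{rCl}
\nonumber
\log\frac{\mathrm{d}P_{\hat{Z} | \vect{\Theta} = \vect{\theta}}^{\left( P_{S}, \beta \right)} }{\mathrm{d}P_S}(z) & = & \log\frac{\mathrm{d}P_{\hat{Z} | \vect{\Theta} = \vect{\theta}}^{\left( P_{S}, \beta \right)} }{\mathrm{d}\bar{P}_{Z|\vect{\Theta}=\vect{\theta}}}(z) + \log\frac{\mathrm{d}\bar{P}_{Z|\vect{\Theta}=\vect{\theta}}}{\mathrm{d}\bar{P}_{Z}}(z) + \log\frac{\mathrm{d}\bar{P}_{Z}}{\mathrm{d}P_S}(z).
\end{IEEEeqnarray}
The factorization through $\bar{P}_Z$ is also justified by $(c)$. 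Substituting this into both double integrals splits the generalization error into three groups of terms.

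The first group consists of the terms involving $\log\frac{\mathrm{d}P^{(P_S,\beta)}_{\hat{Z}|\vect{\Theta}=\vect{\theta}}}{\mathrm{d}\bar{P}_{Z|\vect{\Theta}=\vect{\theta}}}$. They reproduce exactly the last two lines of~\eqref{EqNovember17at11h16in2024Nice}.

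The second group consists of the terms involving $\log\frac{\mathrm{d}\bar{P}_Z}{\mathrm{d}P_S}(z)$. Since this function does not depend on $\vect{\theta}$, its integrals against $\bar{P}_{Z|\vect{\Theta}=\vect{\theta}}\,\mathrm{d}P_{\vect{\Theta}}$ and against $\bar{P}_Z\,\mathrm{d}P_{\vect{\Theta}}$ should coincide and cancel. This requires the consistency identity $\int \bar{P}_{Z|\vect{\Theta}=\vect{\theta}}(\set{A})\,\mathrm{d}P_{\vect{\Theta}}(\vect{\theta}) = \bar{P}_Z(\set{A})$. To prove it, average over $m$ the per-index identity $\int P_{Z_m|\vect{\Theta}=\vect{\theta}}(\set{A})\,\mathrm{d}P_{\vect{\Theta}}(\vect{\theta}) = P_{Z_m}(\set{A})$. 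That per-index identity holds without any i.i.d.\ assumption, since it is just the disintegration of the joint measure of $(\vect{Z},\vect{\Theta})$ evaluated at $\set{B}_m$ in~\eqref{EqNovember25at09h37in2025Sophia}. This is the only place where the averaged measures matter, and it is the step where the error of~\cite[Lemma~5]{perlaza2024generalization} must be avoided. Here only the integrated (not pointwise) equality is needed, so the argument is sound.

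The third group consists of the terms involving $\log\frac{\mathrm{d}\bar{P}_{Z|\vect{\Theta}=\vect{\theta}}}{\mathrm{d}\bar{P}_Z}$. The term integrated against $\bar{P}_{Z|\vect{\Theta}=\vect{\theta}}\,\mathrm{d}P_{\vect{\Theta}}$ is $\int \KL{\bar{P}_{Z|\vect{\Theta}=\vect{\theta}}}{\bar{P}_Z}\mathrm{d}P_{\vect{\Theta}}(\vect{\theta})$. Using the identity just established, $\bar{P}_Z$ is the marginal of $\bar{P}_{Z|\vect{\Theta}}P_{\vect{\Theta}}$, so this term equals $I(\bar{P}_{Z|\vect{\Theta}};P_{\vect{\Theta}})$ as defined in~\cite{perlaza2024generalization}. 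The term integrated against $\bar{P}_Z\,\mathrm{d}P_{\vect{\Theta}}$ is $-\int\KL{\bar{P}_Z}{\bar{P}_{Z|\vect{\Theta}=\vect{\theta}}}\mathrm{d}P_{\vect{\Theta}}$, which is the lautum information $L(\bar{P}_{Z|\vect{\Theta}};P_{\vect{\Theta}})$. With the factor $-\beta$ and the signs carried through, these give the term $-\beta(I+L)$ in~\eqref{EqNovember17at11h16in2024Nice}.

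The main obstacle is bookkeeping rather than substance. The sign conventions must match those in~\cite[Theorem~18]{perlaza2024generalization}. Each split integral must also be finite, so that the linear splitting is legitimate; this finiteness would be assumed as implicitly in Theorem~\ref{TheoremMismatchP_S}. Finally, the definitions of $I$ and $L$ in~\cite{perlaza2024generalization} must accept a general conditional measure paired with $P_{\vect{\Theta}}$, whose induced marginal is $\bar{P}_Z$ by the consistency identity above.
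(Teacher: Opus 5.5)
Your proposal is correct and is essentially the paper's argument. The paper starts directly from Lemma~\ref{LemmaOctober14at16h00in2026Benguerir}. It shows that $\int (\KL{\bar{P}_{Z|\vect{\Theta}=\vect{\theta}}}{P_S} - \KL{\bar{P}_Z}{P_S})\,\mathrm{d}P_{\vect{\Theta}}(\vect{\theta}) = I(\bar{P}_{Z|\vect{\Theta}};P_{\vect{\Theta}})$, and it splits $\int \KL{\bar{P}_Z}{P^{(P_S,\beta)}_{\hat{Z}|\vect{\Theta}=\vect{\theta}}}\,\mathrm{d}P_{\vect{\Theta}}(\vect{\theta})$ into $L(\bar{P}_{Z|\vect{\Theta}};P_{\vect{\Theta}})$ plus a cross term. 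This uses the same Radon--Nikodym chain rule through $\bar{P}_Z$ and $\bar{P}_{Z|\vect{\Theta}=\vect{\theta}}$ and the same consistency identity $\int \bar{P}_{Z|\vect{\Theta}=\vect{\theta}}\,\mathrm{d}P_{\vect{\Theta}}(\vect{\theta}) = \bar{P}_Z$ that you use. You merely organize it as a three-group split of the correctly signed form of Theorem~\ref{TheoremMismatchP_S}.

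One harmless slip in wording: the $\bar{P}_Z$-integral of $\log\frac{\mathrm{d}\bar{P}_{Z|\vect{\Theta}=\vect{\theta}}}{\mathrm{d}\bar{P}_Z}$ equals $-L$ and carries the factor $+\beta$, not $-\beta$. That product is what gives the $-\beta L$ term you state.
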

	\begin{IEEEproof}
From Assumptions~$(a)$, $(b)$ and~$(c)$, Lemma~\ref{LemmaOctober14at16h00in2026Benguerir} holds. %
Then, by following the same steps as in the proof of~\cite[Theorem~18]{perlaza2024generalization}, it follows that the difference~$\int\left(\KL{\bar{P}_{Z | \vect{\Theta} = \vect{\theta}}}{P_S} - \KL{\bar{P}_Z}{P_S}  \right) \mathrm{d} P_{\vect{\Theta}}\left( \vect{\theta} \right)$ and the term $\int \KL{\bar{P}_{Z}}{P_{\hat{Z} | \vect{\Theta} = \vect{\theta}}^{\left( P_{S}, \beta \right)} } \mathrm{d} P_{\vect{\Theta}} \left( \vect{\theta} \right)$ 
in~\eqref{EqOctober19at18h32in2024InTheAirFlyingToAgadir} respectively satisfy the following equalities:
\begin{IEEEeqnarray}{rCl}
\nonumber
& &\int\left(\KL{\bar{P}_{Z | \vect{\Theta} = \vect{\theta}}}{P_S} - \KL{\bar{P}_Z}{P_S}  \right) \mathrm{d} P_{\vect{\Theta}}\left( \vect{\theta} \right) \\
\label{EqNovember13at07h06in2024InTheBusToSophia}
& = & I\left(\bar{P}_{Z | \vect{\Theta}}; P_{\vect{\Theta}} \right).    \spnum
\end{IEEEeqnarray}
and 
	\begin{IEEEeqnarray}{rCl}
		\nonumber
		&& \int \KL{\bar{P}_{Z}}{P_{\hat{Z} | \vect{\Theta} = \vect{\theta}}^{\left( P_{S}, \beta \right)} } \mathrm{d} P_{\vect{\Theta}} \left( \vect{\theta} \right) \\
&=& L\left( \bar{P}_{Z| \vect{\Theta}}; P_{\vect{\Theta}} \right)\\
		\label{EqOctober21at12h40in2024Agadir}
		&&+ \iint\log{\left(\frac{\mathrm{d}\bar{P}_{Z|\vect{\Theta} = \vect{\theta}}}{\mathrm{d}P_{\hat{Z} | \vect{\Theta} = \vect{\theta}}^{\left( P_{S}, \beta \right)} }(z)\right)} \mathrm{d}\bar{P}_{Z}\left(z\right)\mathrm{d}P_{\vect{\Theta}}\left(\vect{\theta}\right).
	\end{IEEEeqnarray}
The proof ends by using the equalities~\eqref{EqNovember13at07h06in2024InTheBusToSophia} and~\eqref{EqOctober21at12h40in2024Agadir} in~\eqref{EqOctober19at18h32in2024InTheAirFlyingToAgadir}, which yields~\eqref{EqNovember17at11h16in2024Nice}.
\end{IEEEproof}
  
The following theorem is a generalization of~\cite[Theorem~20]{perlaza2024generalization}.
  
\begin{theorem}\label{TheoremOctober16at19h15in2026atBenguerir}
Consider the generalization error~$\overline{\overline{\mathsf{G}}} \left(P_{\vect{\Theta} | \vect{Z}}, P_{\vect{Z}} \right)$ in~\cite[Equation~11]{perlaza2024generalization} and assume that:
\begin{itemize}
\item[$(a)$]  For all~$\vect{z} \in \left( \set{X} \times \set{Y} \right)^n$, the probability measures~$P_{\vect{\Theta} | \vect{Z} = \vect{z}}$ and~$P_{\vect{\Theta}}$ in~\cite[Equation~28]{perlaza2024generalization} are mutually absolutely continuous; 
\item[$(b)$]  For all~$\vect{\theta} \in \set{M}$, the probability measures~$P_{\vect{Z} | \vect{\Theta} = \vect{\theta}}$ and~$P_{\vect{Z}}$ in~\cite[Equation~35]{perlaza2024generalization} are mutually absolutely continuous; and
\item[$(c)$] For all~$\vect{\theta} \in \set{M}$,  the probability measures~$\bar{P}_{Z}$ in~\eqref{EqOctober13at23h16in2026BenguerirA} and $\bar{P}_{Z | \vect{\Theta} = \vect{\theta}}$ in~\eqref{EqOctober13at23h16in2026BenguerirB} are both mutually absolutely continuous with respect to~$P_S \in \triangle\left( \set{X} \times \set{Y} \right)$ in~\cite[Equation~51]{perlaza2024generalization}.
\end{itemize}
	Then,
\begin{IEEEeqnarray}{rcl}
	\nonumber
\overline{\overline{\mathsf{G}}} \left(P_{\vect{\Theta} | \vect{Z}}, P_{\vect{Z}} \right) \Dsupersqueezeequ &=& \beta \iint \bigg(\KL{\bar{P}_{Z|\vect{\Theta}=\vect{\mu}}}{P^{(P_{S},\beta)}_{\hat{Z}|\vect{\Theta}=\vect{\mu}}} \Dsupersqueezeequ\IEEEeqnarraynumspace\\
&&- \KL{\bar{P}_{Z|\vect{\Theta}=\vect{\mu}}}{P^{(P_{S},\beta)}_{\hat{Z}|\vect{\Theta}=\vect{\nu}} }\bigg)\mathrm{d}P_{\vect{\Theta}}(\vect{\nu})\mathrm{d}P_{\vect{\Theta}}(\vect{\mu}), \Dsupersqueezeequ\IEEEeqnarraynumspace
\end{IEEEeqnarray}
where the probability measure~$P_{\hat{Z} | \vect{\Theta} = \vect{\theta}}^{\left( P_{S}, \beta \right)}~$ is defined in~\cite[Equation~51]{perlaza2024generalization}; and~$\beta \in \set{J}_{P_S}$, with~$\set{J}_{P_S}$ in~\cite[Equation~50]{perlaza2024generalization}.
\end{theorem}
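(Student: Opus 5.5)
Starting from Lemma~\ref{LemmaOctober14at16h00in2026Benguerir}, which holds under Assumptions~$(a)$, $(b)$ and~$(c)$, the generalization error equals the $P_{\vect{\Theta}}$-average of
\[
\KL{\bar{P}_{Z | \vect{\Theta} = \vect{\theta}}}{P_{\hat{Z} | \vect{\Theta} = \vect{\theta}}^{\left( P_{S}, \beta \right)}} - \KL{\bar{P}_{Z}}{P_{\hat{Z} | \vect{\Theta} = \vect{\theta}}^{\left( P_{S}, \beta \right)}} - \KL{\bar{P}_{Z | \vect{\Theta} = \vect{\theta}}}{P_S} + \KL{\bar{P}_Z}{P_S},
\]
multiplied by $\beta$. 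The first term is already the first term of the claimed double integral, since integrating it against $\mathrm{d}P_{\vect{\Theta}}(\vect{\nu})$ does nothing. The task is therefore to show that the remaining three terms, averaged over $P_{\vect{\Theta}}$, equal $-\iint \KL{\bar{P}_{Z|\vect{\Theta}=\vect{\mu}}}{P^{(P_{S},\beta)}_{\hat{Z}|\vect{\Theta}=\vect{\nu}}}\mathrm{d}P_{\vect{\Theta}}(\vect{\nu})\mathrm{d}P_{\vect{\Theta}}(\vect{\mu})$, following the steps of the proof of~\cite[Theorem~20]{perlaza2024generalization}.

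The key identity is that $\bar{P}_Z$ is the $P_{\vect{\Theta}}$-mixture of the average conditional marginals. For every measurable $\set{A}$,
\[
\int \bar{P}_{Z|\vect{\Theta}=\vect{\mu}}(\set{A})\,\mathrm{d}P_{\vect{\Theta}}(\vect{\mu}) = \frac{1}{n}\sum_{m=1}^{n}\int P_{\vect{Z}|\vect{\Theta}=\vect{\mu}}(\set{B}_m)\,\mathrm{d}P_{\vect{\Theta}}(\vect{\mu}) = \frac{1}{n}\sum_{m=1}^n P_{Z_m}(\set{A}) = \bar{P}_Z(\set{A}),
\]
using~\eqref{EqOctober13at22h54in2026Benguerir} and the disintegration of the joint measure of $(\vect{Z},\vect{\Theta})$. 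This is the only place where the averaging replaces~\cite[Lemma~5]{perlaza2024generalization}. It holds with integration over the whole set $\set{M}$, which is exactly what was established correctly there, so no almost-sure equality is needed. By linearity of integrals of measurable functions under mixtures, for any integrable $f$ we then have $\int f\,\mathrm{d}\bar{P}_Z = \iint f(z)\,\mathrm{d}\bar{P}_{Z|\vect{\Theta}=\vect{\mu}}(z)\,\mathrm{d}P_{\vect{\Theta}}(\vect{\mu})$.

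Next, use Theorem~\ref{TheoremMismatchP_S} (equivalently~\eqref{EqOctober16at12h28in2026BenguerirA} and~\eqref{EqOctober16at12h28in2026BenguerirB}) to write the combination $\KL{\bar{P}_Z}{P_S}-\KL{\bar{P}_Z}{P^{(P_S,\beta)}_{\hat{Z}|\vect{\Theta}=\vect{\nu}}}$ as $\int \log \frac{\mathrm{d}P^{(P_S,\beta)}_{\hat{Z}|\vect{\Theta}=\vect{\nu}}}{\mathrm{d}P_S}\,\mathrm{d}\bar{P}_Z$. Rename the integration variable in this term $\vect{\nu}$ and expand $\bar{P}_Z$ as the mixture over $\vect{\mu}$; this gives $\iint \int \log \frac{\mathrm{d}P^{(P_S,\beta)}_{\hat{Z}|\vect{\Theta}=\vect{\nu}}}{\mathrm{d}P_S}\,\mathrm{d}\bar{P}_{Z|\vect{\Theta}=\vect{\mu}}\,\mathrm{d}P_{\vect{\Theta}}(\vect{\nu})\mathrm{d}P_{\vect{\Theta}}(\vect{\mu})$. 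Write the remaining term $-\int\KL{\bar{P}_{Z|\vect{\Theta}=\vect{\mu}}}{P_S}\,\mathrm{d}P_{\vect{\Theta}}(\vect{\mu})$ trivially as a double integral over $(\vect{\nu},\vect{\mu})$. Combining both terms with the chain rule for Radon–Nikodym derivatives, $\log\frac{\mathrm{d}\bar{P}_{Z|\vect{\Theta}=\vect{\mu}}}{\mathrm{d}P_S} - \log\frac{\mathrm{d}P^{(P_S,\beta)}_{\hat{Z}|\vect{\Theta}=\vect{\nu}}}{\mathrm{d}P_S} = \log\frac{\mathrm{d}\bar{P}_{Z|\vect{\Theta}=\vect{\mu}}}{\mathrm{d}P^{(P_S,\beta)}_{\hat{Z}|\vect{\Theta}=\vect{\nu}}}$, yields exactly $-\KL{\bar{P}_{Z|\vect{\Theta}=\vect{\mu}}}{P^{(P_S,\beta)}_{\hat{Z}|\vect{\Theta}=\vect{\nu}}}$ inside the double integral. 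This chain rule is valid by Assumption~$(c)$ and the mutual absolute continuity of $P^{(P_S,\beta)}_{\hat{Z}|\vect{\Theta}=\vect{\nu}}$ with $P_S$ from~\cite[Equation~51]{perlaza2024generalization}.

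The main obstacle is justifying the interchange of the order of integration (Fubini–Tonelli) and the splitting of differences of integrals, since the log-densities need not be sign-definite. I would first verify absolute integrability of the log-density ratios, as in the original proof of~\cite[Theorem~20]{perlaza2024generalization}. I would then handle the positive and negative parts separately, with finiteness of the relevant relative entropies implicitly assumed wherever Lemma~\ref{LemmaOctober14at16h00in2026Benguerir} is meaningful.
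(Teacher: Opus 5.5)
Your proof is correct, but it takes a different route from the paper. The paper never passes through Lemma~\ref{LemmaOctober14at16h00in2026Benguerir}. It starts from \cite[Lemma~19]{perlaza2024generalization}, which writes the generalization error as the $P_{\vect{\Theta}}\otimes P_{\vect{\Theta}}$-average of $\mathsf{R}_{\vect{\nu}}(P_{\vect{Z}|\vect{\Theta}=\vect{\mu}}) - \mathsf{R}_{\vect{\mu}}(P_{\vect{Z}|\vect{\Theta}=\vect{\mu}})$. It then rewrites each risk by linearity as $\int \ell(\vect{\nu},x,y)\,\mathrm{d}\bar{P}_{Z|\vect{\Theta}=\vect{\mu}}(x,y)$ and applies \cite[Theorem~32]{perlaza2024generalization} at model $\vect{\nu}$ to the mismatched pair $\bar{P}_{Z|\vect{\Theta}=\vect{\mu}}$ and $P^{(P_S,\beta)}_{\hat{Z}|\vect{\Theta}=\vect{\nu}}$. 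The leftover terms $\int \ell(\vect{\theta},x,y)\,\mathrm{d}P^{(P_S,\beta)}_{\hat{Z}|\vect{\Theta}=\vect{\theta}}(x,y) - \beta\KL{P^{(P_S,\beta)}_{\hat{Z}|\vect{\Theta}=\vect{\theta}}}{P_S}$ each depend on a single model and cancel after integrating over both models.

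You instead start from the four-term formula of Lemma~\ref{LemmaOctober14at16h00in2026Benguerir}. You collapse three of its terms using the mixture identity $\bar{P}_Z=\int\bar{P}_{Z|\vect{\Theta}=\vect{\mu}}\,\mathrm{d}P_{\vect{\Theta}}(\vect{\mu})$ and the Radon--Nikodym chain rule. In effect, that combination step is Lemma~\ref{LemmaGeometricEntropy} together with the analogous compensation identity for the fixed reference $P_S$; rearranging those two identities gives exactly your step.

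What each route buys:
your route makes the theorem a short corollary of the step-two lemma. It also pinpoints what replaces \cite[Lemma~5]{perlaza2024generalization}: an identity that holds only after integrating over all of $\set{M}$, which you correctly identify as the valid part of the original argument. The paper's route never involves $\bar{P}_Z$ at all, because the dataset-level disintegration is absorbed into \cite[Lemma~19]{perlaza2024generalization}. It also reuses the same per-model rewriting of risks as the proof of Lemma~\ref{LemmaOctober13at23h53in2026Benguerir}.

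The regularity requirements of the two routes are comparable: essentially integrability of the loss and of the log-partition function of the Gibbs measure. Both arguments leave these implicit, so the Fubini caveat you flag is not a gap relative to the paper.
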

\begin{IEEEproof}
The proof of Theorem~\ref{TheoremOctober16at19h15in2026atBenguerir} is presented in Appendix~\ref{AppendixProofOfTheoremOctober16at19h15in2026atBenguerir}.
\end{IEEEproof}
 
The following lemma is a generalization of~\cite[Lemma~21]{perlaza2024generalization}.

\begin{lemma}\label{LemmaGeometricEntropy}
Consider two probability measures~$P_{\vect{Z}}\in \triangle\left( \left( \set{X} \times \set{Y}\right)^n\right)$ and~$P_{\vect{\Theta}}\in \triangle\left( \set{M} \right)$; and  two  conditional probability measures~$P_{\vect{\Theta}|\vect{Z}} \in \triangle\left( \set{M} | \left( \set{X} \times \set{Y}\right)^n\right)$ and~$P_{\vect{Z} | \vect{\Theta}} \in \triangle\left( \left( \set{X} \times \set{Y}\right)^n | \set{M} \right)$ and assume that:
\begin{itemize}
\item[$(a)$] The measures $P_{\vect{Z}}$, $P_{\vect{\Theta}}$, $P_{\vect{\Theta}|\vect{Z}}$ and $P_{\vect{Z} | \vect{\Theta}}$ satisfy~\cite[Equation~28]{perlaza2024generalization} and ~\cite[Equation~35]{perlaza2024generalization};  
\item[$(b)$]  For all~$\vect{\theta} \in \set{M}$, the probability measure~$P_{\vect{Z} | \vect{\Theta} = \vect{\theta}}$ is absolutely continuous with respect to~$P_{\vect{Z}}$; and 
\item[$(c)$]  For all~$\vect{\theta} \in \set{M}$,  the probability measures~$\bar{P}_{Z}$ in~\eqref{EqOctober13at23h16in2026BenguerirA} and $\bar{P}_{Z | \vect{\Theta} = \vect{\theta}}$ in~\eqref{EqOctober13at23h16in2026BenguerirB} are both mutually absolutely continuous with respect to~$P_S \in \triangle\left( \set{X} \times \set{Y} \right)$ in~\cite[Equation~51]{perlaza2024generalization}.
\end{itemize}
Then,
	\begin{IEEEeqnarray}{rCl}
		\nonumber
&&	\iint \KL{\bar{P}_{Z|\vect{\Theta}=\vect{\theta}}}{P^{(P_{S},\beta)}_{\hat{Z}|\vect{\Theta}=\vect{\nu}} }\bigg)\mathrm{d}P_{\vect{\Theta}}(\vect{\nu})\mathrm{d}P_{\vect{\Theta}}(\vect{\theta})\\
\nonumber
	&=&\int \bigg(\KL{\bar{P}_{Z|\vect{\Theta}=\vect{\theta}}}{\bar{P}_Z}  + \KL{\bar{P}_Z}{P^{(P_{S},\beta)}_{\hat{Z}|\vect{\Theta}=\vect{\theta}}} \bigg)\mathrm{d}P_{\vect{\Theta}}\left(\vect{\theta}\right),
	\end{IEEEeqnarray}
where the probability measure~$P_{\hat{Z} | \vect{\Theta} = \vect{\theta}}^{\left( P_{S}, \beta \right)}~$ is defined in~\cite[Equation~51]{perlaza2024generalization}; and~$\beta \in \set{J}_{P_S}$, with~$\set{J}_{P_S}$ in~\cite[Equation~50]{perlaza2024generalization}.
\end{lemma}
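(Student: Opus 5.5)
The plan is to prove the identity by a chain-rule decomposition of relative entropy. The key fact is that $\bar{P}_Z$ is the $P_{\vect{\Theta}}$-mixture of the averaged conditional marginals. Assumption~$(a)$ gives, for all $m$ and measurable $\set{A}$, $P_{Z_m}(\set{A}) = \int P_{Z_m|\vect{\Theta}=\vect{\theta}}(\set{A})\,\mathrm{d}P_{\vect{\Theta}}(\vect{\theta})$. This holds because $P_{\vect{Z}}(\set{B}_m) = \int P_{\vect{Z}|\vect{\Theta}=\vect{\theta}}(\set{B}_m)\,\mathrm{d}P_{\vect{\Theta}}(\vect{\theta})$, by the consistency of~\cite[Equations~28 and~35]{perlaza2024generalization}. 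This is exactly the part of the original proof of~\cite[Lemma~5]{perlaza2024generalization} that remains valid. Averaging over $m\inCountK{n}$ then yields
\begin{IEEEeqnarray}{rCl}
\bar{P}_Z(\set{A}) & = & \int \bar{P}_{Z|\vect{\Theta}=\vect{\theta}}(\set{A})\,\mathrm{d}P_{\vect{\Theta}}(\vect{\theta}).
\end{IEEEeqnarray}
No i.i.d.\ assumption and no identity of the conditional marginals are needed here.

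Next, fix $\vect{\theta},\vect{\nu}\in\set{M}$. By assumption~$(c)$, all of $\bar{P}_{Z|\vect{\Theta}=\vect{\theta}}$, $\bar{P}_Z$ and $P_S$ are mutually absolutely continuous. Since $P^{(P_S,\beta)}_{\hat{Z}|\vect{\Theta}=\vect{\nu}}$ is a Gibbs measure with respect to $P_S$ in~\cite[Equation~51]{perlaza2024generalization}, and $\beta\in\set{J}_{P_S}$, it is also mutually absolutely continuous with $P_S$. Hence the chain rule for Radon--Nikodym derivatives gives
\begin{IEEEeqnarray}{rCl}
\nonumber
\KL{\bar{P}_{Z|\vect{\Theta}=\vect{\theta}}}{P^{(P_{S},\beta)}_{\hat{Z}|\vect{\Theta}=\vect{\nu}}} & = & \KL{\bar{P}_{Z|\vect{\Theta}=\vect{\theta}}}{\bar{P}_Z}\\
& & +\int \log\left(\frac{\mathrm{d}\bar{P}_Z}{\mathrm{d}P^{(P_{S},\beta)}_{\hat{Z}|\vect{\Theta}=\vect{\nu}}}(z)\right)\mathrm{d}\bar{P}_{Z|\vect{\Theta}=\vect{\theta}}(z).
\end{IEEEeqnarray}
We integrate this with respect to $\mathrm{d}P_{\vect{\Theta}}(\vect{\nu})\,\mathrm{d}P_{\vect{\Theta}}(\vect{\theta})$. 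The first term does not depend on $\vect{\nu}$, so it produces $\int\KL{\bar{P}_{Z|\vect{\Theta}=\vect{\theta}}}{\bar{P}_Z}\,\mathrm{d}P_{\vect{\Theta}}(\vect{\theta})$.

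For the second term, the integrand $\log\frac{\mathrm{d}\bar{P}_Z}{\mathrm{d}P^{(P_S,\beta)}_{\hat{Z}|\vect{\Theta}=\vect{\nu}}}(z)$ does not depend on $\vect{\theta}$. We therefore apply Fubini--Tonelli and integrate first in $\vect{\theta}$. By the mixture identity of the first paragraph, $\int\bar{P}_{Z|\vect{\Theta}=\vect{\theta}}\,\mathrm{d}P_{\vect{\Theta}}(\vect{\theta}) = \bar{P}_Z$, which extends from sets to integrable functions by the usual simple-function argument. The second term thus becomes
\begin{IEEEeqnarray}{rCl}
\int\!\!\int\log\left(\frac{\mathrm{d}\bar{P}_Z}{\mathrm{d}P^{(P_{S},\beta)}_{\hat{Z}|\vect{\Theta}=\vect{\nu}}}(z)\right)\mathrm{d}\bar{P}_Z(z)\,\mathrm{d}P_{\vect{\Theta}}(\vect{\nu}) & = & \int \KL{\bar{P}_Z}{P^{(P_{S},\beta)}_{\hat{Z}|\vect{\Theta}=\vect{\nu}}}\,\mathrm{d}P_{\vect{\Theta}}(\vect{\nu}).
\end{IEEEeqnarray}
Renaming $\vect{\nu}$ as $\vect{\theta}$ gives the claimed right-hand side.

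The main obstacle is justifying the exchange of integrals and the splitting of the log-ratio. The log-density is not sign-definite, so Tonelli does not apply directly. To handle this, I would write
\begin{IEEEeqnarray}{rCl}
\log\frac{\mathrm{d}\bar{P}_Z}{\mathrm{d}P^{(P_S,\beta)}_{\hat{Z}|\vect{\Theta}=\vect{\nu}}} & = & \log\frac{\mathrm{d}\bar{P}_Z}{\mathrm{d}P_S}-\log\frac{\mathrm{d}P^{(P_S,\beta)}_{\hat{Z}|\vect{\Theta}=\vect{\nu}}}{\mathrm{d}P_S}.
\end{IEEEeqnarray}
The second piece is explicit from the Gibbs form: $-\beta^{-1}$ times the loss, plus a log-partition term that is finite because $\beta\in\set{J}_{P_S}$. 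Each piece would then be handled separately, working under the implicit finiteness conventions used throughout~\cite{perlaza2024generalization}. If some term is infinite, both sides are infinite consistently, which I would note rather than belabor.
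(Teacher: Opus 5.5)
Your proposal is correct and follows essentially the same route as the paper. The paper's proof reruns the argument of the original Lemma~21, namely decomposing $D(\bar{P}_{Z|\vect{\Theta}=\vect{\theta}}\,\|\,P^{(P_S,\beta)}_{\hat{Z}|\vect{\Theta}=\vect{\nu}})$ through $\bar{P}_Z$ by the chain rule and then averaging over $\vect{\theta}$, with $P_Z$ and $P_{Z|\vect{\Theta}}$ replaced by $\bar{P}_Z$ and $\bar{P}_{Z|\vect{\Theta}}$. It relies only on assumption~$(c)$ making these averaged measures mutually absolutely continuous. Your explicit derivation of the averaged mixture identity $\bar{P}_Z=\int \bar{P}_{Z|\vect{\Theta}=\vect{\theta}}\,\mathrm{d}P_{\vect{\Theta}}(\vect{\theta})$ from the valid part of the old Lemma~5 is exactly what justifies that substitution, and the paper leaves it implicit.
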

 \begin{IEEEproof}
Note that Assumption~$(c)$ implies that for all~$\vect{\theta} \in \set{M}$,  the probability measures~$\bar{P}_{Z}$ in~\eqref{EqOctober13at23h16in2026BenguerirA} and $\bar{P}_{Z | \vect{\Theta} = \vect{\theta}}$ in~\eqref{EqOctober13at23h16in2026BenguerirB} are mutually absolutely continuous. Using this observation, the proof follows  the same steps of the proof of~\cite[Lemma~21]{perlaza2024generalization} by replacing $P_{Z}$ and $P_{Z | \vect{\Theta}}$ by $\bar{P}_{Z}$ and $\bar{P}_{Z | \vect{\Theta}}$, respectively.
  \end{IEEEproof}
\bibliographystyle{IEEEtran}
\bibliography{references}
 
\appendices

\section{Proof of Lemma~\ref{LemmaOctober13at23h53in2026Benguerir}}\label{AppProofOfLemmaOctober13at23h53in2026Benguerir}

The proof is essentially algebraic and follows from~\cite[Definition~4]{perlaza2024generalization}. More specifically, 
\begin{IEEEeqnarray}{rcl}
\nonumber
& & \overline{\overline{\mathsf{G}}} \left(P_{\vect{\Theta} | \vect{Z}}, P_{\vect{Z}} \right) \\
\label{EqNovember8at14h22in2024SophiaAntipolisA}
& = & \iint  \left(  
\mathsf{R}_{\vect{u}} \left( P_{\vect{\Theta} | \vect{Z} = \vect{z}} \right) - 
\mathsf{R}_{\vect{z}}\left( P_{\vect{\Theta} | \vect{Z} = \vect{z}} \right)
 \right) \mathrm{d}P_{\vect{Z}} \left( \vect{u} \right) \mathrm{d}P_{\vect{Z}} \left( \vect{z} \right) \squeezeequ \IEEEeqnarraynumspace\\
 \label{EqNovember8at14h22in2024SophiaAntipolisB} 
& = & \iiint  \left(  
\mathsf{L} \left(\vect{u}, \vect{\theta}\right) - 
\mathsf{L}\left(\vect{z}, \vect{\theta}\right)
 \right)  \mathrm{d}P_{\vect{\Theta} | \vect{Z} = \vect{z}}  \left( \vect{\theta} \right)  \mathrm{d}P_{\vect{Z}} \left( \vect{u} \right) \mathrm{d}P_{\vect{Z}} \left( \vect{z} \right) \squeezeequ \IEEEeqnarraynumspace\\
 \nonumber
& = & \iint \left( \int 
\mathsf{L} \left(\vect{u}, \vect{\theta}\right)    \mathrm{d}P_{\vect{Z}} \left( \vect{u} \right) \right)   \mathrm{d}P_{\vect{\Theta} | \vect{Z} = \vect{z}}  \left( \vect{\theta} \right)\mathrm{d}P_{\vect{Z}} \left( \vect{z} \right) \\
\label{EqNovember8at14h22in2024SophiaAntipolisC}  
& & -  \iint   \mathsf{L}\left(\vect{z}, \vect{\theta} \right)  \mathrm{d}P_{\vect{\Theta} | \vect{Z} = \vect{z}}  \left( \vect{\theta} \right)   \mathrm{d}P_{\vect{Z}} \left( \vect{z} \right),
\squeezeequ \IEEEeqnarraynumspace\\
\label{EqNovember8at14h22in2024SophiaAntipolisD}  
& = & \iint \left( \mathsf{R}_{\vect{\theta}} \left( P_{\vect{Z}}\right) -  \mathsf{L}\left(\vect{z}, \vect{\theta} \right)  \right) \mathrm{d}P_{\vect{\Theta} | \vect{Z} = \vect{z}}  \left( \vect{\theta} \right)\mathrm{d}P_{\vect{Z}} \left( \vect{z} \right) ,
\end{IEEEeqnarray}
where the equality in~\eqref{EqNovember8at14h22in2024SophiaAntipolisB} follows from~\cite[Equation~9]{perlaza2024generalization}; 
the equality in~\eqref{EqNovember8at14h22in2024SophiaAntipolisC} follows by exchanging the order of the integrals \cite[Theorem~$2.6.6$]{ash2000probability}; and
the equality in~\eqref{EqNovember8at14h22in2024SophiaAntipolisD} follows from~~\cite[Equation~8]{perlaza2024generalization}.  

The proof continues by noticing that: 
\begin{IEEEeqnarray}{rcl}
\nonumber
& & \iint  \mathsf{R}_{\vect{\theta}} \left( P_{\vect{Z}}\right)  \mathrm{d}P_{\vect{\Theta} | \vect{Z} = \vect{z}}  \left( \vect{\theta} \right)\mathrm{d}P_{\vect{Z}} \left( \vect{z} \right) 
\squeezeequ \IEEEeqnarraynumspace\\ 
\label{EqNovember8at15h23in2024SophiaAntipolisB}  
& = & \iint  \mathsf{R}_{\vect{\theta}} \left( P_{\vect{Z}}\right)  \frac{\mathrm{d}P_{\vect{\Theta} | \vect{Z} = \vect{z}}}{\mathrm{d} P_{\vect{\Theta}}}  \left( \vect{\theta} \right) \mathrm{d} P_{\vect{\Theta}} \left( \vect{\theta} \right) \mathrm{d}P_{\vect{Z}} \left( \vect{z} \right) 
\squeezeequ \IEEEeqnarraynumspace\\
\label{EqNovember8at15h23in2024SophiaAntipolisC}  
& = & \int   \mathsf{R}_{\vect{\theta}} \left( P_{\vect{Z}}\right)  \left( \int \frac{\mathrm{d}P_{\vect{\Theta} | \vect{Z} = \vect{z}}}{\mathrm{d} P_{\vect{\Theta}}}  \left( \vect{\theta} \right)  \mathrm{d}P_{\vect{Z}} \left( \vect{z} \right) \right)   \mathrm{d} P_{\vect{\Theta}} \left( \vect{\theta} \right) \squeezeequ \IEEEeqnarraynumspace \\
\label{EqNovember8at15h23in2024SophiaAntipolisD}  
& = & \int   \mathsf{R}_{\vect{\theta}} \left( P_{\vect{Z}}\right)      \mathrm{d} P_{\vect{\Theta}} \left( \vect{\theta} \right) \squeezeequ \IEEEeqnarraynumspace \\
\label{EqNovember8at15h23in2024SophiaAntipolisE}  
& = & \iint   \left( \frac{1}{n}\sum_{t =1}^{n} \ell\left( x_t, y_t, \vect{\theta}\right) \right) \mathrm{d} P_{\vect{Z}}\left( \vect{z} \right)  \mathrm{d} P_{\vect{\Theta}} \left( \vect{\theta} \right)\\
\label{EqNovember8at15h23in2024SophiaAntipolisF}  
& = &   \int \frac{1}{n}\sum_{t =1}^{n} \int   \ell\left( x, y, \vect{\theta}\right)  \mathrm{d} P_{Z_t}\left( x,y \right)  \mathrm{d} P_{\vect{\Theta}} \left( \vect{\theta} \right)\\
\label{EqOctober14at11h06in2026BenguerirA}  
& = &   \int \frac{1}{n}\sum_{t =1}^{n} \int   \ell\left( x, y, \vect{\theta}\right)  \frac{\mathrm{d}P_{Z_t}}{\mathrm{d}\bar{P}_{Z}}(x,y) \mathrm{d} \bar{P}_{Z}\left( x,y \right)  \mathrm{d} P_{\vect{\Theta}} \left( \vect{\theta} \right)\\
\label{EqOctober14at11h06in2026BenguerirB}  
& = &   \iint  \left( \ell\left( x, y, \vect{\theta}\right) \right) \left( \frac{1}{n}\sum_{t =1}^{n} \frac{\mathrm{d}P_{Z_t}}{\mathrm{d}\bar{P}_{Z}}(x,y) \right)\mathrm{d} \bar{P}_{Z}\left( x,y \right)  \mathrm{d} P_{\vect{\Theta}} \left( \vect{\theta} \right)\\
\label{EqNovember8at15h23in2024SophiaAntipolisG}  
& = & \iint   \left( \ell\left( x, y, \vect{\theta}\right) \right) \mathrm{d} \bar{P}_{Z}\left( x,y \right)  \mathrm{d} P_{\vect{\Theta}} \left( \vect{\theta} \right),
\end{IEEEeqnarray}
where the equality~\eqref{EqNovember8at15h23in2024SophiaAntipolisB} follows from Assumption~$(a)$  and \cite[Theorem~2]{InriaRR9591};
 the equality in~\eqref{EqNovember8at15h23in2024SophiaAntipolisC}  follows by exchanging the order of the integrals \cite[Theorem~$2.6.6$]{ash2000probability};
 the equality in~\eqref{EqNovember8at15h23in2024SophiaAntipolisD} follows from \cite[Theorem~$10$]{InriaRR9591};
the equality in~\eqref{EqNovember8at15h23in2024SophiaAntipolisE} follows from~\cite[Equations~7 and~8]{perlaza2024generalization}; 
the equality in~\eqref{EqNovember8at15h23in2024SophiaAntipolisF} follows from~\cite[Theorem~$1.6.3$]{ash2000probability};
the equality in~\eqref{EqOctober14at11h06in2026BenguerirA} follows from the fact that for all $t \inCountK{n}$, the measure $P_{Z_t}$ is absolutely continuous with $\bar{P}_{Z}$ and thus, ~\cite[Theorem~$2$]{InriaRR9591} holds;
the equality in~\eqref{EqOctober14at11h06in2026BenguerirB} 
follows by exchanging the order of the integrals \cite[Theorem~$2.6.6$]{ash2000probability}; and 
the equality in~\eqref{EqNovember8at15h23in2024SophiaAntipolisG} follows from the fact that for all $(x,y) \in \supp \bar{P}_{Z}$,
\begin{IEEEeqnarray}{rcl}
\label{EqOctober14at11h43in2026BenguerirA}
1 & = & \frac{\mathrm{d}\bar{P}_{Z}}{\mathrm{d}\bar{P}_{Z}} \left(x,y \right)\\
\label{EqOctober14at11h43in2026BenguerirB}
& = & \frac{1}{n} \sum_{t=1}^{n} \frac{\mathrm{d} P_{Z_t}}{\mathrm{d}\bar{P}_{Z}} \left(x,y \right).
\end{IEEEeqnarray}
The equality in~\eqref{EqOctober14at11h43in2026BenguerirA} follows from \cite[Theorem~3]{InriaRR9591}; and the equality in~\eqref{EqOctober14at11h43in2026BenguerirB} follows from \cite[Theorem~6]{InriaRR9591}.

Finally, note that
\begin{IEEEeqnarray}{rcl}
\nonumber
& &  \iint \mathsf{L}\left(\vect{z}, \vect{\theta} \right) \mathrm{d}P_{\vect{\Theta} | \vect{Z} = \vect{z}}  \left( \vect{\theta} \right)\mathrm{d}P_{\vect{Z}} \left( \vect{z} \right) \\
\label{EqNovember10at13h45in2024NiceB}  
& = &  \iint \mathsf{L}\left(\vect{z}, \vect{\theta} \right)  \frac{\mathrm{d}P_{\vect{\Theta} | \vect{Z} = \vect{z}}}{\mathrm{d} P_{\vect{\Theta}}}  \left( \vect{\theta} \right) \mathrm{d} P_{\vect{\Theta}} \left( \vect{\theta} \right) \mathrm{d}P_{\vect{Z}} \left( \vect{z} \right) 
 \IEEEeqnarraynumspace\\
\label{EqNovember10at13h45in2024NiceC}  
& = &  \iint \mathsf{L}\left(\vect{z}, \vect{\theta} \right)  \frac{\mathrm{d}P_{\vect{\Theta} | \vect{Z} = \vect{z}}}{\mathrm{d} P_{\vect{\Theta}}}  \left( \vect{\theta} \right) \mathrm{d}P_{\vect{Z}} \left( \vect{z} \right)  \mathrm{d} P_{\vect{\Theta}} \left( \vect{\theta} \right)
 \IEEEeqnarraynumspace\\
\label{EqNovember10at13h45in2024NiceD}  
& = &  \iint \mathsf{L}\left(\vect{z}, \vect{\theta} \right)  
\frac{\mathrm{d} P_{\vect{Z} | \vect{\Theta} = \vect{\theta}}}{\mathrm{d}P_{\vect{Z}}}  \left( \vect{z} \right) \mathrm{d}P_{\vect{Z} }\left( \vect{z} \right)  \mathrm{d} P_{\vect{\Theta}} \left( \vect{\theta} \right)
 \IEEEeqnarraynumspace\\
\label{EqNovember10at13h45in2024NiceE}  
& = &  \iint \mathsf{L}\left(\vect{z}, \vect{\theta} \right)  \mathrm{d}P_{\vect{Z} | \vect{\Theta} = \vect{\theta}}\left( \vect{z} \right)  \mathrm{d} P_{\vect{\Theta}} \left( \vect{\theta} \right)
 \IEEEeqnarraynumspace\\
\label{EqNovember10at13h45in2024NiceF}  
& = &  \iint\left( \frac{1}{n}\sum_{t =1}^{n} \ell\left( x_t, y_t, \vect{\theta}\right) \right) \mathrm{d} P_{\vect{Z} | \vect{\Theta} = \vect{\theta}}\left( \vect{z} \right)  \mathrm{d} P_{\vect{\Theta}}\left( \vect{\theta}\right) \IEEEeqnarraynumspace\\
\label{EqNovember10at13h45in2024NiceG}  
& = & \int \frac{1}{n}\sum_{t =1}^{n} \int   \ell\left( x, y,  \vect{\theta} \right)  \mathrm{d} P_{Z_t | \vect{\Theta} = \vect{\theta}}\left( x,y \right)  \mathrm{d} P_{\vect{\Theta}}\left( \vect{\theta}\right)\\
\label{EqOctober14at12h12in2026BenguerirA}  
& = & \int \frac{1}{n}\sum_{t =1}^{n} \int   \ell\left( x, y,  \vect{\theta} \right) \frac{\mathrm{d} P_{Z_t | \vect{\Theta} = \vect{\theta}}}{\mathrm{d} \bar{P}_{Z | \vect{\Theta} = \vect{\theta}}} (x,y) \mathrm{d} \bar{P}_{Z | \vect{\Theta} = \vect{\theta}}\left( x,y \right)  \mathrm{d} P_{\vect{\Theta}}\left( \vect{\theta}\right)\\
\label{EqOctober14at12h12in2026BenguerirB}  
& = & \iint   \ell\left( x, y,  \vect{\theta} \right) \left( \frac{1}{n}\sum_{t =1}^{n}  \frac{\mathrm{d} P_{Z_t | \vect{\Theta} = \vect{\theta}}}{\mathrm{d} \bar{P}_{Z | \vect{\Theta} = \vect{\theta}}} (x,y) \right)\mathrm{d} \bar{P}_{Z | \vect{\Theta} = \vect{\theta}}\left( x,y \right)  \mathrm{d} P_{\vect{\Theta}}\left( \vect{\theta}\right)\\
\label{EqNovember10at13h45in2024NiceH}  
& = & \iint   \ell\left( x, y,  \vect{\theta} \right)  \mathrm{d} \bar{P}_{Z | \vect{\Theta} = \vect{\theta}}\left( x,y \right)  \mathrm{d} P_{\vect{\Theta}}\left( \vect{\theta}\right),
\end{IEEEeqnarray}
where 
the equality~\eqref{EqNovember10at13h45in2024NiceB} follows from Assumption~$(a)$ and \cite[Theorem~2]{InriaRR9591}; 
the equality in~\eqref{EqNovember10at13h45in2024NiceC}  follows by exchanging the order of the integrals \cite[Theorem~$2.6.6$]{ash2000probability}; 
the equality~\eqref{EqNovember10at13h45in2024NiceD} follows from Assumption~$(b)$ and \cite[Theorem~11]{InriaRR9591}; 
the equality in~\eqref{EqNovember10at13h45in2024NiceE} follows from~\cite[Theorem~2]{InriaRR9591}; 
the equality in~\eqref{EqNovember10at13h45in2024NiceF} follows from~\cite[Equation~7]{perlaza2024generalization};
the equality in~\eqref{EqNovember10at13h45in2024NiceG} follows from~\cite[Theorem~$1.6.3$]{ash2000probability};
the equality in~\eqref{EqOctober14at12h12in2026BenguerirA} follows from the fact that for all $t \inCountK{n}$, the measure $P_{Z_t | \vect{\Theta} = \vect{\theta}}$ is absolutely continuous with $\bar{P}_{Z| \vect{\Theta} = \vect{\theta}}$ and thus, ~\cite[Theorem~$2$]{InriaRR9591} holds;
the equality in~\eqref{EqOctober14at12h12in2026BenguerirB} 
follows by exchanging the order of the integrals \cite[Theorem~$2.6.6$]{ash2000probability}; and 
the equality in~\eqref{EqNovember10at13h45in2024NiceH} follows from the fact that for all $\vect{\theta} \in \set{M}$ and for all $(x,y) \in \supp \bar{P}_{Z| \vect{\Theta} = \vect{\theta}}$,
\begin{IEEEeqnarray}{rcl}
\label{EqOctober14at12h25in2026BenguerirA}
1 & = & \frac{\mathrm{d}\bar{P}_{Z| \vect{\Theta} = \vect{\theta}}}{\mathrm{d}\bar{P}_{Z| \vect{\Theta} = \vect{\theta}}} \left(x,y \right)\\
\label{EqOctober14at12h25in2026BenguerirB}
& = & \frac{1}{n} \sum_{t=1}^{n} \frac{\mathrm{d} P_{Z_t| \vect{\Theta} = \vect{\theta}}}{\mathrm{d}\bar{P}_{Z| \vect{\Theta} = \vect{\theta}}} \left(x,y \right).
\end{IEEEeqnarray}
The equality in~\eqref{EqOctober14at12h25in2026BenguerirA} follows from \cite[Theorem~3]{InriaRR9591}; and the equality in~\eqref{EqOctober14at12h25in2026BenguerirB} follows from \cite[Theorem~6]{InriaRR9591}.

The proof ends by using~\eqref{EqNovember8at15h23in2024SophiaAntipolisG} and~\eqref{EqNovember10at13h45in2024NiceH} in~\eqref{EqNovember8at14h22in2024SophiaAntipolisD}, which yields
\begin{IEEEeqnarray}{rcl}
\nonumber
 \overline{\overline{\mathsf{G}}} \left(P_{\vect{\Theta} | \vect{Z}}, P_{\vect{Z}} \right)  
& = & \iint   \ell\left( x, y, \vect{\theta}\right)  \mathrm{d} \bar{P}_{Z}\left( x,y \right) \mathrm{d} P_{\vect{\Theta}}\left( \vect{\theta}\right) \\
& & -  \iint  \ell\left( x, y,  \vect{\theta} \right)  \mathrm{d} \bar{P}_{Z | \vect{\Theta} = \vect{\theta}}\left( x,y \right) \mathrm{d} P_{\vect{\Theta}}\left( \vect{\theta}\right) \squeezeequ\spnum\\
\label{EqNovember10at16h19in2024Nice}
& = & \int \mathsf{G} \left( \theta,  \bar{P}_{Z} , \bar{P}_{Z | \vect{\Theta} = \vect{\theta}} \right) \mathrm{d} P_{\vect{\Theta}}\left( \vect{\theta}\right),
\end{IEEEeqnarray}
where the equality in~\eqref{EqNovember10at16h19in2024Nice} follows from~\cite[Equation~21]{perlaza2024generalization}.
This completes the proof.
 
 \section{Proof of Theorem~\ref{TheoremOctober16at19h15in2026atBenguerir}}
\label{AppendixProofOfTheoremOctober16at19h15in2026atBenguerir}

The proof follows from~\cite[Lemma~19]{perlaza2024generalization}, which holds under Assumptions $(a)$, $(b)$, and $(c)$ and does not rely on \cite[Lemma~5]{perlaza2024generalization}.
More specifically, the proof consists of re-writing the difference~$\mathsf{R}_{\vect{\nu}}\left(P_{\vect{Z}|\vect{\Theta}=\vect{\mu}}\right) - \mathsf{R}_{\vect{\mu}}\left(P_{\vect{Z}|\vect{\Theta}=\vect{\mu}}\right)$ in~\cite[Equation~137]{perlaza2024generalization} using the functional~$\mathsf{G}$ in~\cite[Equation~21]{perlaza2024generalization}.
Note that for all $\left( \vect{\nu}, \vect{\mu} \right) \in \set{M} \times \set{M}$, 
	\begin{IEEEeqnarray}{rCl}
\label{EqOctober16at19h12in2026BenguerirA}
\mathsf{R}_{\vect{\nu}}\left(P_{\vect{Z}|\vect{\Theta}=\vect{\mu}}\right) &=& \int \mathsf{L}\left( \vect{z}, \vect{\nu} \right) \mathrm{d} P_{\vect{Z}|\vect{\Theta}=\vect{\mu}} \left(\vect{z} \right) \\
\label{EqOctober16at19h12in2026BenguerirB}
	&=& \int \frac{1}{n} \sum_{t =1 }^{n}\ell\left(\vect{\nu}, x_t,y_t \right) \mathrm{d}P_{\vect{Z}|\vect{\Theta}=\vect{\mu}}( \vect{z})\\
	\label{EqOctober16at19h12in2026BenguerirC}
	&=&  \frac{1}{n} \sum_{t =1 }^{n} \int \ell\left(\vect{\nu}, x,y \right) \mathrm{d}P_{Z_t|\vect{\Theta}=\vect{\mu}}( x,y)\\
	\label{EqOctober16at19h12in2026BenguerirD}
	&=&  \frac{1}{n} \sum_{t =1 }^{n} \int \ell\left(\vect{\nu}, x,y \right) \frac{\mathrm{d}P_{Z_t|\vect{\Theta}=\vect{\mu}}}{\mathrm{d}\bar{P}_{Z|\vect{\Theta}=\vect{\mu}}}( x,y)\mathrm{d}\bar{P}_{Z|\vect{\Theta}=\vect{\mu}}( x,y)\\
	\label{EqOctober16at19h12in2026BenguerirE}
&=&   \int \ell\left(\vect{\nu}, x,y \right) \left( \frac{1}{n} \sum_{t =1 }^{n} \frac{\mathrm{d}P_{Z_t|\vect{\Theta}=\vect{\mu}}} {\mathrm{d}\bar{P}_{Z|\vect{\Theta}=\vect{\mu}}}( x,y) \right)\mathrm{d}\bar{P}_{Z|\vect{\Theta}=\vect{\mu}}( x,y)\\
\label{EqOctober16at19h12in2026BenguerirF}
&=&   \int \ell\left(\vect{\nu}, x,y \right)\mathrm{d}\bar{P}_{Z|\vect{\Theta}=\vect{\mu}}( x,y), 
		\end{IEEEeqnarray}
where 
the equality in~\eqref{EqOctober16at19h12in2026BenguerirA} follows from~\cite[Equation~8]{perlaza2024generalization};
the equality in~\eqref{EqOctober16at19h12in2026BenguerirB} follows from~\cite[Equation~7]{perlaza2024generalization};
the equality in~\eqref{EqOctober16at19h12in2026BenguerirC} follows from~\cite[Theorem~$1.6.3$]{ash2000probability};
the equality in~\eqref{EqOctober16at19h12in2026BenguerirD} follows from~\cite[Theorem~2]{InriaRR9591}; and 
the equality in~\eqref{EqOctober16at19h12in2026BenguerirE} follows from~\eqref{EqOctober14at12h25in2026BenguerirB}.

Hence, the following holds for all~$\left( \vect{\mu} , \vect{\nu} \right) \in \set{M}\times \set{M}$,
	\begin{IEEEeqnarray}{rCl}
		\nonumber
	&&\mathsf{R}_{\vect{\nu}}\left(P_{\vect{Z}|\vect{\Theta}=\vect{\mu}}\right)\\
\label{EqNovember3at20h51in2024Nice}
	&=& 	\mathsf{G}\left(\vect{\nu},\bar{P}_{Z|\vect{\Theta}=\vect{\mu}},P^{(P_{S},\beta)}_{\hat{Z}|\vect{\Theta}=\vect{\nu}} \right)  + \int \ell\left(\vect{\nu},x,y\right)\mathrm{d} P_{\hat{Z} | \vect{\Theta} = \vect{\nu}}^{\left( P_{S}, \beta \right)} (x,y),  \supersqueezeequ \spnum
	\end{IEEEeqnarray}	
	and 
	\begin{IEEEeqnarray}{rCl}
		\nonumber
	&&\mathsf{R}_{\vect{\mu}} \left(P_{\vect{Z}|\vect{\Theta}=\vect{\mu}}\right)\\
	\label{EqNovember3at21h00in2024Nice}
	&=& \mathsf{G}\left(\vect{\mu},\bar{P}_{Z|\vect{\Theta}=\vect{\mu}},P^{(P_{S},\beta)}_{\hat{Z}|\vect{\Theta}=\vect{\mu}} \right)  + \int \ell\left(\vect{\mu},x,y\right)\mathrm{d} P_{\hat{Z} | \vect{\Theta} = \vect{\mu}}^{\left( P_{S}, \beta \right)} (x,y),  \supersqueezeequ \spnum
	\end{IEEEeqnarray}
where the functional~$\mathsf{G}$ is defined in~\cite[Equation~21]{perlaza2024generalization}.	
From Assumption~$(c)$ and~\cite[Theorem~32]{perlaza2024generalization}, it holds that
\begin{IEEEeqnarray}{rCl}
	\nonumber
&&	\mathsf{G}\left(\vect{\nu},\bar{P}_{Z|\vect{\Theta}=\vect{\mu}},P^{(P_{S},\beta)}_{\hat{Z}|\vect{\Theta}=\vect{\nu}} \right)  \\
\nonumber
	&=& \beta \bigg( \KL{\bar{P}_{Z|\vect{\Theta}=\vect{\mu}}}{P_S} - \KL{\bar{P}_{Z|\vect{\Theta}=\vect{\mu}}}{P^{(P_{S},\beta)}_{\hat{Z}|\vect{\Theta}=\vect{\nu}} } \\
	\label{EqNovember3at21h06in2024Nice}
	&&- \KL{P^{(P_{S},\beta)}_{\hat{Z}|\vect{\Theta}=\vect{\nu}} }{P_S}\bigg),
\end{IEEEeqnarray}
and
\begin{IEEEeqnarray}{rCl}
	\nonumber
		&&\mathsf{G}\left(\vect{\mu},\bar{P}_{Z|\vect{\Theta}=\vect{\mu}},P^{(P_{S},\beta)}_{\hat{Z}|\vect{\Theta}=\vect{\mu}} \right) \\
\nonumber
		&=& \beta \bigg( \KL{\bar{P}_{Z|\vect{\Theta}=\vect{\mu}}}{P_S} - \KL{\bar{P}_{Z|\vect{\Theta}=\vect{\mu}}}{P^{(P_{S},\beta)}_{\hat{Z}|\vect{\Theta}=\vect{\mu}}} \\
	\label{EqNovember3at21h07in2024Nice}
	&&- \KL{P^{(P_{S},\beta)}_{\hat{Z}|\vect{\Theta}=\vect{\mu}}}{P_S}\bigg).
	\end{IEEEeqnarray}

Using these equalities, it follows from~\eqref{EqNovember3at20h51in2024Nice} and~\eqref{EqNovember3at21h00in2024Nice}	that
\begin{IEEEeqnarray}{rCl}
	\nonumber
&&	\mathsf{R}_{\vect{\nu}}\left(P_{\vect{Z}|\vect{\Theta}=\vect{\mu}}\right) - \mathsf{R}_{\vect{\mu}} \left(P_{\vect{Z}|\vect{\Theta}=\vect{\mu}}\right)\\
\nonumber
&=& 	\mathsf{G}\left(\vect{\nu},\bar{P}_{Z|\vect{\Theta}=\vect{\mu}},P^{(P_{S},\beta)}_{\hat{Z}|\vect{\Theta}=\vect{\nu}} \right)  + \int \ell\left(\vect{\nu},x,y\right)\mathrm{d} P_{\hat{Z} | \vect{\Theta} = \vect{\nu}}^{\left( P_{S}, \beta \right)} (x,y) \supersqueezeequ \spnum\\
\label{EqNovember3at21h05in2024Nice}		
& & - \mathsf{G}\left(\vect{\mu},\bar{P}_{Z|\vect{\Theta}=\vect{\mu}},P^{(P_{S},\beta)}_{\hat{Z}|\vect{\Theta}=\vect{\mu}} \right) - \int \ell\left(\vect{\mu},x,y\right)\mathrm{d} P_{\hat{Z} | \vect{\Theta} = \vect{\mu}}^{\left( P_{S}, \beta \right)} (x,y) \supersqueezeequ \spnum\\
\nonumber
	&=& \beta \bigg( - \KL{\bar{P}_{Z|\vect{\Theta}=\vect{\mu}}}{P^{(P_{S},\beta)}_{\hat{Z}|\vect{\Theta}=\vect{\nu}} } - \KL{P^{(P_{S},\beta)}_{\hat{Z}|\vect{\Theta}=\vect{\nu}} }{P_S}\bigg)\\
\nonumber
	&& -\beta \bigg(-\KL{\bar{P}_{Z|\vect{\Theta}=\vect{\mu}}}{P^{(P_{S},\beta)}_{\hat{Z}|\vect{\Theta}=\vect{\mu}}} -\KL{P^{(P_{S},\beta)}_{\hat{Z}|\vect{\Theta}=\vect{\mu}}}{P_S}\bigg)\supersqueezeequ  \spnum \\
\nonumber
	&&+ \int \ell\left(\vect{\nu},x,y\right)\mathrm{d} P_{\hat{Z} | \vect{\Theta} = \vect{\nu}}^{\left( P_{S}, \beta \right)} (x,y) \\
\label{EqNovember17at13h09in2024Nice}
	&&- \int \ell\left(\vect{\mu},x,y\right)\mathrm{d} P_{\hat{Z} | \vect{\Theta} = \vect{\mu}}^{\left( P_{S}, \beta \right)} (x,y) \Dsupersqueezeequ  \spnum\\
	\nonumber
	&=& \beta \KL{\bar{P}_{Z|\vect{\Theta}=\vect{\mu}}}{P^{(P_{S},\beta)}_{\hat{Z}|\vect{\Theta}=\vect{\mu}}} - \beta  \KL{\bar{P}_{Z|\vect{\Theta}=\vect{\mu}}}{P^{(P_{S},\beta)}_{\hat{Z}|\vect{\Theta}=\vect{\nu}} } \Dsupersqueezeequ  \spnum\\
\nonumber
	&&+ \int \ell\left(\vect{\nu},x,y\right)\mathrm{d} P_{\hat{Z} | \vect{\Theta} = \vect{\nu}}^{\left( P_{S}, \beta \right)} (x,y) - \beta \KL{P^{(P_{S},\beta)}_{\hat{Z}|\vect{\Theta}=\vect{\nu}} }{P_S}\\
\label{EqNovember3at21h05in2024NiceB}		
	&&- \int \ell\left(\vect{\mu},x,y\right)\mathrm{d} P_{\hat{Z} | \vect{\Theta} = \vect{\mu}}^{\left( P_{S}, \beta \right)} (x,y) + \beta \KL{P^{(P_{S},\beta)}_{\hat{Z}|\vect{\Theta}=\vect{\mu}}}{P_S}
\end{IEEEeqnarray}
where the equality in~\eqref{EqNovember17at13h09in2024Nice} follows from both~\eqref{EqNovember3at21h06in2024Nice} and~\eqref{EqNovember3at21h07in2024Nice}.

Finally, using~\cite[Lemma~19]{perlaza2024generalization} and \eqref{EqNovember3at21h05in2024NiceB}, yields
\begin{IEEEeqnarray}{rCl}
	\nonumber
	&&\overline{\overline{\mathsf{G}}} \left(P_{\vect{\Theta} | \vect{Z}}, P_{\vect{Z}} \right)\\
	\nonumber
	&=& \beta \iint \bigg(\KL{\bar{P}_{Z|\vect{\Theta}=\vect{\mu}}}{P^{(P_{S},\beta)}_{\hat{Z}|\vect{\Theta}=\vect{\mu}}} \\
	&&- \KL{\bar{P}_{Z|\vect{\Theta}=\vect{\mu}}}{P^{(P_{S},\beta)}_{\hat{Z}|\vect{\Theta}=\vect{\nu}} }\bigg)\mathrm{d}P_{\vect{\Theta}}(\vect{\nu})\mathrm{d}P_{\vect{\Theta}}(\vect{\mu}),
\end{IEEEeqnarray}
which completes the proof.

\end{document}